\newtheorem{definition}{Definition}
\newtheorem{lemma}[definition]{Lemma}
\newtheorem{theorem}[definition]{Theorem}
\newtheorem{corollary}[definition]{Corollary}
\def\squareforqed{\hbox{\rlap{$\sqcap$}$\sqcup$}}
\def\qed{\ifmmode\squareforqed\else{\unskip\nobreak\hfil
\penalty50\hskip1em\null\nobreak\hfil\squareforqed
\parfillskip=0pt\finalhyphendemerits=0\endgraf}\fi}
\def\endenv{\ifmmode\;\else{\unskip\nobreak\hfil
\penalty50\hskip1em\null\nobreak\hfil\;
\parfillskip=0pt\finalhyphendemerits=0\endgraf}\fi}
\newenvironment{proof}{\noindent \textbf{{Proof~} }}{\qed}
\newenvironment{proof-of}[1]{\noindent \textbf{{Proof~#1} }}{\qed}
\newenvironment{remark}{\noindent \textbf{{Remark~}}}{\qed}
\newenvironment{example}{\noindent \textbf{{Example~}}}{\qed}
\mathchardef\ordinarycolon\mathcode`\:
\def\vcentcolon{\mathrel{\mathop\ordinarycolon}}
\newcommand{\nc}{\newcommand}
\nc{\rnc}{\renewcommand}
\nc{\beq}{\begin{equation}}
\nc{\eeq}{{\end{equation}}}
\nc{\beqa}{\begin{eqnarray}}
\nc{\eeqa}{\end{eqnarray}}
\nc{\lbar}[1]{\overline{#1}}
\nc{\bra}[1]{\langle#1|}
\nc{\ket}[1]{|#1\rangle}
\nc{\ketbra}[2]{|#1\rangle\!\langle#2|}
\nc{\braket}[2]{\langle#1|#2\rangle}
\nc{\proj}[1]{| #1\rangle\!\langle #1 |}
\nc{\avg}[1]{\langle#1\rangle}
\nc{\Rank}{\operatorname{Rank}}
\nc{\smfrac}[2]{\mbox{$\frac{#1}{#2}$}}
\nc{\tr}{\operatorname{Tr}}
\nc{\ox}{\otimes}
\nc{\dg}{\dagger}
\nc{\dn}{\downarrow}
\nc{\cA}{{\cal A}}
\nc{\cB}{{\cal B}}
\nc{\cC}{{\cal C}}
\nc{\cD}{{\cal D}}
\nc{\cE}{{\cal E}}
\nc{\cF}{{\cal F}}
\nc{\cG}{{\cal G}}
\nc{\cH}{{\cal H}}
\nc{\cI}{{\cal I}}
\nc{\cJ}{{\cal J}}
\nc{\cK}{{\cal K}}
\nc{\cL}{{\cal L}}
\nc{\cM}{{\cal M}}
\nc{\cN}{{\cal N}}
\nc{\cO}{{\cal O}}
\nc{\cP}{{\cal P}}
\nc{\cQ}{{\cal Q}}
\nc{\cR}{{\cal R}}
\nc{\cS}{{\cal S}}
\nc{\cT}{{\cal T}}
\nc{\cX}{{\cal X}}
\nc{\cY}{{\cal Y}}
\nc{\cZ}{{\cal Z}}
\nc{\csupp}{{\operatorname{csupp}}}
\nc{\qsupp}{{\operatorname{qsupp}}}
\nc{\var}{{\operatorname{var}}}
\nc{\rar}{\rightarrow}
\nc{\lrar}{\longrightarrow}
\nc{\polylog}{{\operatorname{polylog}}}
\nc{\wt}{{\operatorname{wt}}}
\nc{\av}[1]{{\left\langle {#1} \right\rangle}}
\nc{\RR}{{{\mathbb R}}}
\nc{\CC}{{{\mathbb C}}}
\nc{\FF}{{{\mathbb F}}}
\nc{\NN}{{{\mathbb N}}}
\nc{\ZZ}{{{\mathbb Z}}}
\nc{\PP}{{{\mathbb P}}}
\nc{\QQ}{{{\mathbb Q}}}
\nc{\UU}{{{\mathbb U}}}
\nc{\EE}{{{\mathbb E}}}
\nc{\id}{{\operatorname{id}}}
\nc{\CHSH}{{\operatorname{CHSH}}}
\nc{\be}{\begin{equation}}
\nc{\ee}{{\end{equation}}}
\nc{\bea}{\begin{eqnarray}}
\nc{\eea}{\end{eqnarray}}
\nc{\Hom}[2]{\mbox{Hom}(\CC^{#1},\CC^{#2})}
\nc{\rU}{\mbox{U}}
\nc{\ob}[1]{#1}
\nc{\SEP}{{\text{SEP}}}
\nc{\NS}{{\text{NS}}}
\nc{\SNOS}{{\text{SNOS}}}
\nc{\LOCC}{{\text{LOCC}}}
\nc{\PPT}{{\text{PPT}}}
\nc{\EXT}{{\text{EXT}}}
\nc{\Sym}{{\operatorname{Sym}}}
\nc{\ERLO}{{E_{\text{r,LO}}}}
\nc{\ERLOCC}{{E_{\text{r,LOCC}}}}
\nc{\ERPPT}{{E_{\text{r,PPT}}}}
\nc{\ERLOCCinfty}{{E^{\infty}_{\text{r,LOCC}}}}
\nc{\Aram}{{\operatorname{\sf A}}}
\DeclareFontFamily{U}{mathx}{\hyphenchar\font45}
\DeclareFontShape{U}{mathx}{m}{n}{
      <5> <6> <7> <8> <9> <10>
      <10.95> <12> <14.4> <17.28> <20.74> <24.88>
      mathx10
      }{}
\DeclareSymbolFont{mathx}{U}{mathx}{m}{n}
\DeclareMathSymbol{\bigtimes}{1}{mathx}{"91}
\begin{document}

\title{Parallel repetition and concentration for (sub-)no-signalling games \protect\\
       via a flexible constrained de Finetti reduction}

\author{C\'{e}cilia Lancien}
\email{lancien@math.univ-lyon1.fr}
\affiliation{Departament de F\'{\i}sica: Grup d'Informaci\'{o} Qu\`{a}ntica, Universitat Aut\`{o}noma de Barcelona, 08193 Bellaterra, Barcelona, Spain}
\affiliation{Institut Camille Jordan, Universit\'{e} Claude Bernard Lyon 1, 43 boulevard du 11 novembre 1918, 69622 Villeurbanne Cedex, France}

\author{Andreas Winter}
\email{andreas.winter@uab.cat}
\affiliation{Departament de F\'{\i}sica: Grup d'Informaci\'{o} Qu\`{a}ntica, Universitat Aut\`{o}noma de Barcelona, 08193 Bellaterra, Barcelona, Spain}
\affiliation{ICREA -- Instituci\'{o} Catalana de Recerca i Estudis Avan\c{c}ats, Pg.~Llu\'{\i}s Companys, 23, 08010 Barcelona, Spain}

\begin{abstract}
We use a recently discovered constrained de Finetti reduction (aka ``Post-Selection Lemma'') to study the parallel repetition of multi-player non-local games under no-signalling strategies. Since the technique allows us to reduce general strategies to independent plays, we obtain parallel repetition (corresponding to winning all rounds) in the same way as exponential concentration of the probability to win a fraction larger than the value of the game.

Our proof technique leads us naturally to a relaxation of no-signalling (NS) strategies, which we dub \emph{sub-no-signalling (SNOS)}. While for two players the two concepts coincide, they differ for three or more players. Our results are most complete and satisfying for arbitrary number of sub-no-signalling players, where we get universal parallel repetition and concentration for any game, while the no-signalling case is obtained as a corollary, but only for games with ``full support''.
\end{abstract}

\date{3rd July 2016}

\maketitle

\thispagestyle{empty}

\section{Non-local multi-player games and their parallel repetition: short review of previous approaches and ours}
\label{sec:games}

A multi-player non-local game is played between cooperating but non-communicating
players. Each player receives an input from some input alphabet and has to
produce an output in some output alphabet. The common goal of the players
is to satisfy some pre-defined predicate on their inputs and outputs. For that,
they may agree on a strategy before the game starts, but are then not allowed to
communicate anymore. Such games are especially relevant in theoretical physics in
the context of the foundations of quantum mechanics and quantum information,
and in computer science where they arise in multi-prover interactive proof
systems.
Indeed, they may provide an intuitive and quantitative understanding of the role
played by various degrees of correlations in global systems which are composed of
several local subsystems. These games also arise in complexity theory, under the
formulation of multi-provers with some shared resources producing a protocol that
should convince a referee, or in cryptography as attacks from malicious parties
having a more or less restricted physical power.

The \emph{value} of a game is the maximum winning probability of the players,
over all allowed joint strategies, using possibly some prescribed correlation
resource such as shared randomness, quantum entanglement or no-signalling
correlations. It has been a subject of considerable study how the availability of different
resources affects the values of certain games \cite{Bell,C-H-S-H,Tsi,P-R,B-L-M-P-P-R}.

In this context, a natural question is how the value of a game behaves when $n$ independent
instances of the game are played simultaneously, i.e.~each player gets
$n$ independent inputs and has to provide $n$ outputs such that each
game instance is won (or a large fraction of them). This is the parallel
repetition problem. Playing independently the optimal single-game strategy
on all $n$ game instances will result in an exponentially decreasing
winning probability. But although that was found paradoxical at first, this
is in general not optimal \cite{Feige,F-V}.
For classical two-player games, Raz~\cite{Raz}, later simplified and
improved by Holenstein~\cite{Hol}, established the first general parallel
repetition theorem, showing that the value of $n$ repetitions decreases
exponentially for every game. Holenstein~\cite{Hol} also proved an analogous
parallel repetition theorem for the no-signalling value of general
two-player games. Only recently, parallel repetition theorems
were proved for the entangled value of two-player games: for general games, nothing better than a polynomial decay result is known up to now (this was intially proved for a slightly modified game \cite{K-V}, and very recently only for the game itself \cite{Yuen}), while exponential decay results have been established in several special cases (perfect parallel repetition for XOR games~\cite{C-S-U-U}, exponential decrease under parallel repetition for unique games~\cite{K-R-T}, projection games~\cite{D-S-V}, free games~\cite{C-S,J-P-Y}).

Even less is known concerning multi-player games. And apart from \cite{C-W-Y} (containing both classical and quantum statements), results were obtained only in the no-signalling setting~\cite{B-F-S,AF-R-V,Ros}.
The present work has the same focus on multiple no-signalling players,
albeit we will find that the theory becomes much more satisfying
for \emph{sub-no-signalling} players.

Before getting into more precise and technical statements, let us give a high-level exposition of the philosophy of the present work, and especially how it compares to or differs from previous approaches. The standard proof technique to tackle parallel repetition (in either the classical, the quantum or the no-signalling case) consists in iteratively assuming that the players have won a given instance of the game and then studying how this affects their winning probability in the others. Hence, if one can show that, conditioned on the event ``the players have already won $k$ instances of the game'', the probability is high that they lose in at least $1$, resp.~most, of the $n-k$ remaining instances, one gets exponential decay of the probability of winning all, resp.~a fraction above the game value of, $n$ instances of the game played in parallel. The main drawback of this approach is probably its ``locality'', which makes it not so straightforward and not so easily generalizable to more than two players. Here, we take a more ``global'' look at the problem, by attempting to reduce the study of such $n$-instance game to that of $n$ i.i.d.~$1$-instance games, whose analysis is trivial. This is where de Finetti type statements come into play: using the fact that the repeated game is symmetric under permutation of its parallel rounds, these allow relating it in some way to independent rounds.

However, there are certain steps from the standard route which we do not avoid in our approach. One of them is some kind of reconstruction step. Phrased informally: we have to be able to say at some point that, if our strategy almost satisfies the constraints defining our set of interest, then there must exist a strategy which exactly satisfies them and which is not too far away from it. Nonetheless, it is not always so easy to get handleable quantitative versions of this quite natural expectancy. This is the main reason why the set of sub-no-signalling strategies that we introduce is such a nice one for studying parallel repetition. Indeed, we prove that if a strategy satisfies all the sub-no-signalling constraints, up to some error $\epsilon$, then it is $C\epsilon$-close to the set of sub-no-signalling strategies, with a constant $C$ which depends only on the number of players. And this fact ultimately translates into a universal exponential decay statement for the sub-no-signalling value of repeated multi-player games. This kind of stability property actually also holds for the set of two-player no-signalling strategies, which was discovered and used by Holenstein to prove universal parallel repetition in that case \cite{Hol}. Oppositely, it remains unknown whether this is still true for three or more players, which explains why all parallel repetition results for strictly more than two no-signalling players are game-dependent ones \cite{B-F-S,AF-R-V}. Viewing the no-signalling setting inside our broader sub-no-signalling framework, we are also able to reproduce these earlier findings. One notable advantage of our approach is that it is particularly well-suited to studying the concentration problem as well, and once exponential decay of the probability of winning all game instances is established, exponential decay of winning a too high fraction of them comes almost for free.

\section{Non-local multi-player games and (sub-)no-signalling strategies: definitions and first observations}

Specifically, we will consider here $\ell$-player games $G$ with input alphabets
$\cX_1,\ldots,\cX_\ell$ and output alphabets $\cA_1,\ldots,\cA_\ell$.
By way of notation,
\[ \underline{\cX} := \bigtimes_{i=1}^\ell \cX_i\ \text{and}\ \underline{\cA} := \bigtimes_{i=1}^\ell \cA_i. \]
Furthermore,
for any subset $I\subset [\ell]$ of indices,
\[ \cX_I := \bigtimes_{i\in I}\cX_i\ \text{and}\ \cA_I := \bigtimes_{i\in I} \cA_i. \]
An element from $\cX_i,\cX_I,\underline{\cX}$ will usually be denoted by $x_i,x_I,\underline{x}$, respectively, sometimes without explicitly specifying the set it belongs to (and similarly for $\cA_i,\cA_I,\underline{\cA}$).

Also, for any $I,J\subset [\ell]$, given $T$ a probability distribution (which we may quite often abbreviate by ``p.d.'') on $\cX_I$, resp.~$P$ a conditional probability distribution on $\cA_J|\cX_I$, we may denote it by $T_{\cX_I}$, resp.~$P_{\cA_J|\cX_I}$, when confusion on the considered alphabets is at risk.

From now on, we will be interested in making minimal a priori assumptions on how powerful
the $\ell$ players may be. This will naturally lead us to considering that
their common strategy to win the game $G$ could be any no-signalling (or even
sub-no-signalling) strategy, which we define now.

\begin{definition} \label{def:NS-SNOS}
  The sets of \emph{no-signalling} and \emph{sub-no-signalling}
  correlations, denoted respectively $\NS(\underline{\cA}|\underline{\cX})$
  and $\SNOS(\underline{\cA}|\underline{\cX})$,
  consist of non-negative densities $P(\underline{a}|\underline{x}) \geq 0$
  defined as follows:
  \begin{equation}
    \label{eq:NS}
    P \in \NS(\underline{\cA}|\underline{\cX})
      :\Leftrightarrow
      \forall\ I\subsetneq[\ell],\ \exists\ Q(\cdot|x_I)\text{ p.d.'s on }\cA_I\text{ s.t.~}
      \forall\ \underline{x},a_I,\ P(a_I|\underline{x}) = Q(a_I|x_I),
  \end{equation}
  \begin{equation}
    \label{eq:SNOS}
    P \in \SNOS(\underline{\cA}|\underline{\cX})
      :\Leftrightarrow
      \forall\ I\subsetneq[\ell],\ \exists\ Q(\cdot|x_I)\text{ p.d.'s on }\cA_I\text{ s.t.~}
      \forall\ \underline{x},a_I,\ P(a_I|\underline{x}) \leq Q(a_I|x_I).
  \end{equation}
  Here, $P(a_I|\underline{x})$ denotes the marginal density,
  \[
    P(a_I|\underline{x}) = \sum_{a_{I^c} \in \cA_{I^c}} P(\underline{a}=a_I a_{I^c}|\underline{x}).
  \]
\end{definition}

\begin{remark}
Note that under this definition,
$\NS(\underline{\cA}|\underline{\cX}) \subset \SNOS(\underline{\cA}|\underline{\cX})$,
but the latter is a strictly larger set (e.g.~it always contains the
all-zero density). Furthermore,
$P \in \NS(\underline{\cA}|\underline{\cX})$
iff $P \in \SNOS(\underline{\cA}|\underline{\cX})$ and
$P$ is \emph{normalized} in the sense that for all
$\underline{x} \in \underline{\cX}$,
$\sum_{\underline{a}} P(\underline{a}|\underline{x}) = 1$.
Indeed, $\NS$ consists of conditional probability distributions, while $\SNOS$
allows, given each input, a total ``probability'' of less than or equal to $1$.

Also, it can be shown that in equation~(\ref{eq:NS}), only sets of the
form $I=[\ell]\setminus i$ need to be considered. This is because
the no-signalling conditions take the form of equations and
this subset spans the set of all equations required (cf.~\cite{Han}, Lemma 2.7). The analogous
statement for sub-no-signalling is not known and likely false.
Nevertheless, one might in other contexts consider to relax
the conditions of equation~(\ref{eq:SNOS}) to hold only for a selected family
of subsets $I \subset [\ell]$.
\end{remark}

An $\ell$-player game $G$ is characterized by a probability distribution
$T(\underline{x})$ on the queries $\underline{\cX}$, and a binary
predicate $V(\underline{a},\underline{x}) \in \{0,1\}$
on the answers and queries $\underline{\cA}\times\underline{\cX}$, as illustrated in Figure \ref{fig:game}. The no-signalling, resp.~ sub-no-signalling, value of the game, denoted $\omega_{\NS}(G)$, resp.~$\omega_{\SNOS}(G)$, is the maximum of the winning probability
\[ \PP\left(\text{win}\right) = \EE\, V(\underline{A},\underline{X})
                    = \sum_{\underline{a},\underline{x}} T(\underline{x}) V(\underline{a},\underline{x})
                                                                          P(\underline{a}|\underline{x}) \]
over all $P\in\NS(\underline{\cA}|\underline{\cX})$, resp.~$P\in \SNOS(\underline{\cA}|\underline{\cX})$, where the distribution of
$\underline{X}=X_1\ldots X_\ell$ and $\underline{A}=A_1\ldots A_\ell$
is as expected,
\[ \forall\ \underline{x},\underline{a},\ \PP\left(\underline{X}=\underline{x},\, \underline{A}=\underline{a}\right)
                           = T(\underline{x}) P(\underline{a}|\underline{x}). \]

In words, the (sub-)no-signalling value of a game is the maximal probability of winning it when no limitation is assumed on the power of the players, apart from the fact that they cannot signal information instantaneously from one another. In the sub-no-signalling case, constraints are relaxed even more: players are not forced to always produce an output, and it is only required that their strategy ``looks as if it were no-signalling'' (even though they may have ``hidden'' in their abstentions the fact that it is signalling). In Section \ref{sec:end}, we extend on the physical interpretation of sub-no-signalling, and briefly discuss other kinds of restrictions that one may put on the players' physical power, such as shared randomness or shared quantum entanglement only.

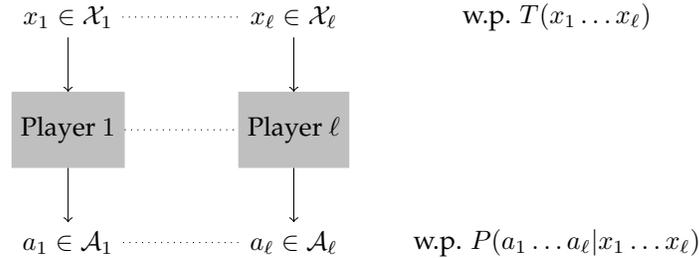
\begin{figure}[h]
\caption{An $\ell$-player non local game}
\begin{center}
\begin{tikzpicture} [scale=1]
\node[draw=lightgray, minimum height=1cm, minimum width=1cm, fill=lightgray] (X) at (0,0) {Player $1$};
\node[draw=white, minimum height=0.5cm, minimum width=1cm, fill=white] (X') at (0,1.5) {$x_1\in\mathcal{X}_1$};
\node[draw=white, minimum height=0.5cm, minimum width=1cm, fill=white] (X'') at (0,-1.5) {$a_1\in\mathcal{A}_1$};
\node[draw=lightgray, minimum height=1cm, minimum width=1cm, fill=lightgray] (Y) at (3,0) {Player $\ell$};
\node[draw=white, minimum height=0.5cm, minimum width=1cm, fill=white] (Y') at (3,1.5) {$x_\ell\in\mathcal{X}_\ell$};
\node[draw=white, minimum height=0.5cm, minimum width=1cm, fill=white] (Y'') at (3,-1.5) {$a_\ell\in\mathcal{A}_\ell$};
\node[draw=white, minimum height=0.5cm, minimum width=1cm, fill=white] at (6.5,1.5) {w.p. $T(x_1\ldots x_\ell)$};
\node[draw=white, minimum height=0.5cm, minimum width=1cm, fill=white] at (6.5,-1.5) {w.p. $P(a_1\ldots a_\ell|x_1\ldots x_\ell)$};
\node[draw=white, minimum height=0.5cm, minimum width=6cm, fill=white] at (4,-2.5) {The players win iff $V(a_1\ldots a_\ell,x_1\ldots x_\ell)=1$};
\draw [->] (X'.south) -- (X.north); \draw [->] (X.south) -- (X''.north); \draw [->] (Y'.south) -- (Y.north); \draw [->] (Y.south) -- (Y''.north);
\draw [dotted] (X.east) -- (Y.west); \draw [dotted] (X'.east) -- (Y'.west); \draw [dotted] (X''.east) -- (Y''.west);
\end{tikzpicture}
\end{center}
\label{fig:game}
\end{figure}

\subsection{Two-player SNOS $\mathbf{\equiv}$ NS}

Not surprisingly, the no-signalling and sub-no-signalling values of
games are related. We start by showing that for any two-player game $G$,
they are identical, i.e.~$\omega_{\NS}(G) = \omega_{\SNOS}(G)$.
As $\NS \subset \SNOS$, the
inequality ``$\leq$'' is evident, and we only need to prove the opposite
inequality ``$\geq$''. This follows from the following structural lemma.

\begin{lemma}[cf.~\cite{Ito}]
  \label{lemma:2-player-SNOS}
  Let $P \in \SNOS(\cA\times\cB|\cX\times\cY)$ be a two-player
  sub-no-signalling correlation. Then there exists a no-signalling
  correlation $P' \in \NS(\cA\times\cB|\cX\times\cY)$ with $P\leq P'$
  pointwise, i.e.~$P(ab|xy) \leq P'(ab|xy)$ for all $a,b,x,y$.
\end{lemma}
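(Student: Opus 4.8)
The plan is to build the no-signalling extension $P'$ by first fixing its single-party marginals and then completing these to a genuine bivariate probability density, pair of inputs by pair of inputs. Writing out Definition~\ref{def:NS-SNOS} for $\ell=2$ and the two proper subsets $I=\{1\}$ and $I=\{2\}$, the hypothesis $P\in\SNOS$ hands us probability distributions $Q_A(\cdot|x)$ on $\cA$ and $Q_B(\cdot|y)$ on $\cB$ such that the marginals of $P$ obey $P(a|xy):=\sum_b P(ab|xy)\leq Q_A(a|x)$ and $P(b|xy):=\sum_a P(ab|xy)\leq Q_B(b|y)$ for all $a,b,x,y$. I would \emph{declare} these to be the target marginals of $P'$, i.e.\ I demand $\sum_b P'(ab|xy)=Q_A(a|x)$ and $\sum_a P'(ab|xy)=Q_B(b|y)$. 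Since $Q_A$ does not depend on $y$ and $Q_B$ does not depend on $x$, \emph{any} $P'$ realising these marginals is automatically no-signalling and, because $Q_A,Q_B$ are normalised, automatically a conditional probability distribution; so the entire content of the lemma is reduced to a marginal-completion problem at each fixed input pair.

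Then, fixing $(x,y)$, the task is: given the non-negative ``sub-density'' $P(\cdot\cdot|xy)$ with total mass $m:=\sum_{ab}P(ab|xy)\leq 1$, produce $P'(\cdot\cdot|xy)\geq P(\cdot\cdot|xy)$ pointwise with row sums $Q_A(\cdot|x)$ and column sums $Q_B(\cdot|y)$. Equivalently, I must exhibit a non-negative correction $\Delta(ab):=P'(ab|xy)-P(ab|xy)\geq 0$ with prescribed row sums $d_A(a):=Q_A(a|x)-P(a|xy)\geq 0$ and column sums $d_B(b):=Q_B(b|y)-P(b|xy)\geq 0$. The two deficit vectors are non-negative precisely by the $\SNOS$ inequalities, and they share the same total mass $\sum_a d_A(a)=1-m=\sum_b d_B(b)$.

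The remaining point --- which is really the only thing to check --- is that a non-negative matrix with given non-negative row and column sums of equal total always exists. When $m<1$ this is immediate from the rank-one choice $\Delta(ab)=d_A(a)\,d_B(b)/(1-m)$, whose row and column sums are $d_A(a)$ and $d_B(b)$ respectively; when $m=1$ both deficit vectors vanish and $\Delta=0$ works (indeed $m=1$ forces $Q_A(\cdot|x)=P(\cdot|xy)$ and $Q_B(\cdot|y)=P(\cdot|xy)$, so no mass needs to be moved at all). Setting $P'(ab|xy)=P(ab|xy)+\Delta(ab)$ and running this construction over all input pairs yields the desired $P'\in\NS(\cA\times\cB|\cX\times\cY)$ with $P\leq P'$ pointwise.

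I expect the main obstacle to be conceptual rather than computational: one must resist trying to modify the strategy ``globally'' and instead notice that the no-signalling constraints decouple once the marginals are pinned to the $\SNOS$ certificates $Q_A,Q_B$, after which each input pair is an independent, and trivially solvable, transportation problem with a pointwise lower bound. The only place where care is needed is checking that the chosen marginals are simultaneously (i) normalised, (ii) dominating the true marginals for \emph{every} value of the partner's input, and (iii) independent of that partner's input --- all three of which are delivered for free by the form of the $\SNOS$ conditions.
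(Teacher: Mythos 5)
Your proof is correct, but it takes a genuinely different route from the paper's. The paper fixes the same SNOS certificates $Q_A(\cdot|x)$, $Q_B(\cdot|y)$ and then \emph{iteratively} ``bumps up'' entries of $P(\cdot\cdot|xy)$: wherever both marginal inequalities are strict at some $(a,b)$, it increases $P(ab|xy)$ by the maximal admissible $\epsilon>0$, arguing that each step reduces the number of strict inequalities, so the process terminates at a correlation whose marginals equal $Q_A,Q_B$ (with a compactness argument given as a backup to justify that the maximum total mass is attained). You instead solve the completion problem in \emph{one shot}: having pinned the target marginals to $Q_A(\cdot|x)$ and $Q_B(\cdot|y)$, you observe that at each fixed input pair the task is a transportation problem for the deficit vectors $d_A=Q_A(\cdot|x)-P(\cdot|xy)$ and $d_B=Q_B(\cdot|y)-P(\cdot|xy)$, which are non-negative with equal total mass $1-m$, and you write down the explicit product coupling $\Delta(ab)=d_A(a)d_B(b)/(1-m)$ (with $\Delta=0$ when $m=1$). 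Your version buys an explicit closed-form $P'$ with no iteration, termination argument, or compactness, and it makes transparent exactly what is special about two players: the two marginal constraints do not overlap, so each input pair decouples into an independently and trivially solvable margin-completion problem --- precisely the structure that breaks down for $\ell\geq 3$, as the paper's Example~\ref{ex} (the anti-correlation game) shows, since there the analogous completion would require a non-negative tensor with prescribed \emph{overlapping} marginals, which can be infeasible. The paper's bumping-up formulation, for its part, has the mild advantage of never committing to particular target marginals being reached simultaneously and of phrasing the obstruction for more players in terms of ``frustration'' of pointwise increments, which is the language used in its subsequent discussion.
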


Since playing a game $G$ with a strategy $P$ necessarily yields a smaller value than playing it with a strategy $P'$ which dominates $P$ pointwise, it is clear that once Lemma \ref{lemma:2-player-SNOS} is proved we just have to apply it to $P$ an optimal $\SNOS$ strategy for $G$ to get the inequality ``$\geq$''.

%\medskip
%\begin{remark}
%  \textcolor{red}{Point out that Buhrman et al. had this in a simpler version!}
%\end{remark}

\begin{proof}
  If $P$ is normalized, i.e.~if for all $x,y$,
  $\sum_{ab} P(ab|xy)=1$, there is nothing to prove because
  $P$ is already no-signalling.

  Otherwise, there exist $x,y$ with weight
  $\sum_{ab} P(ab|xy) =: w < 1$. By sub-no-signalling assumption, we have
  distributions $Q(a|x)$ and $Q(b|y)$ dominating the marginals:
  \[
    \forall\ a,b,\ P(a|xy) \leq Q(a|x),\ P(b|xy) \leq Q(b|y).
  \]
  As the total weight of both marginals of $P(\cdot|xy)$ is $w < 1$, we can find
  $a$ and $b$ such that
  \[
    P(a|xy) < Q(a|x), \quad P(b|xy) < Q(b|y),
  \]
  so we can increase $P(ab|xy)$ by some $\epsilon > 0$
  to $P'(ab|xy) = P(ab|xy) + \epsilon$ and still satisfy the
  sub-no-signalling conditions. By choosing $\epsilon$ maximally
  so, we can reduce the total number of strict inequality signs
  in the SNOS conditions. Iterating this procedure we arrive
  at a sub-no-signalling correlation $P'$ with all inequalities
  met with equality, i.e.~a no-signalling correlation.

  Another presentation of this argument appeals to compactness.
  Consider the following set of correlations:
  \[ \mathrm{X}_{P,Q} :=\left\{ P'\ :\ \forall\ a,b,x,y,\ P'(ab|xy)\geq P(ab|xy),\ P'(a|x)\leq Q(a|x),\ P'(b|y)\leq Q(b|y) \right\}. \]
  $\mathrm{X}_{P,Q}$ being compact and $P'\mapsto \sum_{xy} \sum_{ab} P'(ab|xy)$ being continuous,
  \[ \sup \left\{ \sum_{xy} \sum_{ab} P'(ab|xy)\ :\ P'\in\mathrm{X}_{P,Q} \right\} \]
  is actually attained. If it were less than $|\cX\times\cY|$, we could use the procedure
  above to increase the objective function, contradicting that it
  is a maximum.
\end{proof}

Note that the ``bumping up'' procedure described above, in order to transform any two-player sub-no-signalling strategy into a no-signalling one dominating it pointwise, may fail for more players. The two-player case is indeed special, due to non-overlapping of the two SNOS or NS constraints. However, already in the case of three players, even just the three inequalities $P_{\cA_i\cA_j|\cX_i\cX_j\cX_k}\leq Q_{\cA_i\cA_j|\cX_i\cX_j}$ may be impossible to bring simultaneously to equalities by pointwise increment (as illustrated by Example \ref{ex} below).

\subsection{Multi-player SNOS vs NS}

Clearly, $\omega_{\NS}(G) \leq \omega_{\SNOS}(G)$ for every game,
and there are examples of games (with game distribution $T$ having
strictly smaller than full support) where $\omega_{\NS}(G) < 1$
but $\omega_{\SNOS}(G)=1$, for instance the \textit{anticorrelation
game}.

\begin{example} \label{ex}
\normalfont
{\textbf (cf.~\cite{AF-R-V}, Appendix A)}
Consider the three-player \emph{anti-correlation game} $A_3$, which
has binary input and output for all players and game
distribution $T$ supported on $\{0,1\}^3\setminus\{111\}$,
i.e.~$111$ does not occur as a triple of questions.
The winning predicate is that if any two inputs are $1$,
say $x_i=x_j=1$, then the corresponding outputs must be
different, $a_i \neq a_j$. While if there are zero or only a single
$1$ amongst the inputs, outputs may be arbitrary.

It is straightforward to verify that the following correlation
is in $\SNOS\left(\{0,1\}^3|\{0,1\}^3\right)$ and wins the game with certainty:
\[
  P(a_1a_2a_3|x_1x_2x_3) = \begin{cases}
                             0                         & \text{ if } x_1x_2x_3=111, \\
                             1/8                   & \text{ if } \exists\ 1\leq i\neq j\leq 3:\ x_i=x_j=0, \\
                             \delta_{a_i,1-a_j}/4 & \text{ if } \exists\ 1\leq i\neq j\leq 3:\ x_i=x_j=1 \text{ and } x_1x_2x_3 \neq 111.
                           \end{cases}
\]
So $\omega_{\SNOS}(A_3)=1$. On the other hand, for, say, $T$ uniform on $\{011,101,110\}$,
one can check by elementary means that $\omega_{\NS}(A_3) = 2/3$.
\end{example}

What happens in the above example is that it is possible to satisfy any two amongst the three no-signalling constraints, but not the three of them at the same time. This is a phenomenon sometimes referred to as ``frustration''.

However, for a game distribution $T$ having full support, a simple reasoning shows that
$\omega_{\NS}(G) < 1$ implies $\omega_{\SNOS}(G) < 1$. Indeed,
we show the contrapositive, assuming that $\omega_{\SNOS}(G) = 1$.
Because of the full support of $T$, this implies that for the
optimal sub-no-signalling strategy $P$ and every $\underline{x}$,
\[
  1 =    \sum_{\underline{a}} V(\underline{a},\underline{x}) P(\underline{a}|\underline{x})
    \leq \sum_{\underline{a}} P(\underline{a}|\underline{x})
    \leq 1,
\]
hence equality (i.e.~normalization) holds for all $\underline{x}$.
Thus, $P$ is really a no-signalling correlation and so
$\omega_{\NS}(G)=1$.
In fact, we can show something stronger, namely the following quantitative relationship.

\begin{lemma}
  \label{lemma:SNOS-NS-multiplayers}
  Consider a game distribution $T$ with full support on $\underline{\cX}$.
  Then there exists $\Gamma=\Gamma(T)\geq 0$, which only depends
  on $T$, such that for every game $G$ with query distribution
  $T$,
  \[ \omega_{\SNOS}(G) \geq 1-\epsilon \ \Rightarrow\ \omega_{\NS}(G) \geq 1-(\Gamma+1)\epsilon. \]
  The definition of $\Gamma$ can be taken from~\cite{B-F-S} or~\cite{AF-R-V}, where it is implicitly defined as some robustness parameter of the linear program whose optimal value is $\omega_{\NS}(G)$.
\end{lemma}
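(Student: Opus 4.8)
The plan is to take an optimal $\SNOS$ strategy and repair it into a genuine $\NS$ strategy while losing only $O(\epsilon)$ in the winning probability; the repair is a reconstruction step realized by a Hoffman-type stability bound for the $\NS$ polytope measured in the $T$-weighted $\ell_1$ metric, and the resulting Lipschitz constant will be the promised $\Gamma=\Gamma(T)$. To begin, I would quantify the ``defect'' of an optimal strategy. Let $P\in\SNOS(\underline{\cA}|\underline{\cX})$ attain $\omega_{\SNOS}(G)\geq 1-\epsilon$, and set $w(\underline{x}):=\sum_{\underline{a}}P(\underline{a}|\underline{x})$ and $\delta(\underline{x}):=1-w(\underline{x})$. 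Applying the $\SNOS$ condition with any nonempty $I\subsetneq[\ell]$ and summing over $a_I$ gives $w(\underline{x})\leq\sum_{a_I}Q(a_I|x_I)=1$, so $\delta(\underline{x})\geq 0$; since $V\leq 1$ we have $\sum_{\underline{a}}V(\underline{a},\underline{x})P(\underline{a}|\underline{x})\leq w(\underline{x})$, whence
\[ 1-\epsilon \;\leq\; \omega_{\SNOS}(G) \;\leq\; \sum_{\underline{x}}T(\underline{x})w(\underline{x}) \;=\; 1-\sum_{\underline{x}}T(\underline{x})\delta(\underline{x}), \]
so the $T$-averaged defect obeys $\sum_{\underline{x}}T(\underline{x})\delta(\underline{x})\leq\epsilon$.

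Next I would show that a small defect forces $P$ to be approximately normalized and approximately no-signalling. Normalization fails by exactly $\delta(\underline{x})$. For no-signalling, fix $I\subsetneq[\ell]$ and let $Q(\cdot|x_I)$ be the dominating p.d.\ from the $\SNOS$ definition; since $\sum_{a_I}P(a_I|\underline{x})=w(\underline{x})$ while $\sum_{a_I}Q(a_I|x_I)=1$ and $Q-P\geq 0$ termwise, the marginal $P(\cdot_I|\underline{x})$ lies within $\ell_1$-distance $\delta(\underline{x})$ of the $x_{I^c}$-independent distribution $Q(\cdot_I|x_I)$. Consequently, any two inputs agreeing on $I$ produce $I$-marginals differing by at most $\delta(\underline{x})+\delta(\underline{x}')$. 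Thus $P$ violates each defining (in)equality cutting out $\NS(\underline{\cA}|\underline{\cX})$ — normalization together with the marginal-consistency equations for $I=[\ell]\setminus i$ (which by the Remark suffice), plus nonnegativity, which $P$ already satisfies — by an amount controlled pointwise by the $\delta$'s, so the total $T$-weighted constraint violation is $O\!\left(\sum_{\underline{x}}T(\underline{x})\delta(\underline{x})\right)=O(\epsilon)$.

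The crux is the reconstruction. Because $\NS(\underline{\cA}|\underline{\cX})$ is a fixed polyhedron, Hoffman's lemma supplies a Lipschitz constant — finite and independent of the predicate $V$ — bounding the distance from any point to the polytope by a multiple of the magnitude of its constraint violations. Measuring both distance and violations in the $T$-weighted $\ell_1$ norm, this constant depends only on $T$ (full support enters here, through $\min_{\underline{x}}T(\underline{x})>0$, when converting between weighted and unweighted norms), and it is exactly the robustness parameter $\Gamma=\Gamma(T)$ of the linear program for $\omega_{\NS}(G)$ identified in \cite{B-F-S,AF-R-V}, with the combinatorial constants from the previous paragraph absorbed into it. Applying it to $P$ yields $P'\in\NS(\underline{\cA}|\underline{\cX})$ with $\sum_{\underline{x}}T(\underline{x})\sum_{\underline{a}}|P(\underline{a}|\underline{x})-P'(\underline{a}|\underline{x})|\leq\Gamma\epsilon$. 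The essential point, and the reason $\Gamma$ sees only $T$, is that this reconstruction lives entirely at the level of the $\NS$ polytope and the $T$-weighted metric, neither of which involves $V$.

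Finally I would compare values: using $0\leq V\leq 1$ and that $P'$ is a legitimate $\NS$ strategy,
\[ \omega_{\NS}(G) \;\geq\; \sum_{\underline{x},\underline{a}}T(\underline{x})V(\underline{a},\underline{x})P'(\underline{a}|\underline{x}) \;\geq\; \omega_{\SNOS}(G) - \sum_{\underline{x}}T(\underline{x})\sum_{\underline{a}}\left|P(\underline{a}|\underline{x})-P'(\underline{a}|\underline{x})\right| \;\geq\; (1-\epsilon)-\Gamma\epsilon, \]
which is the claimed bound $1-(\Gamma+1)\epsilon$: the ``$+1$'' is the loss from $\omega_{\SNOS}(G)\geq 1-\epsilon$, and the ``$\Gamma$'' is the reconstruction distance. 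I expect the single genuine obstacle to be the third step, namely pinning down the Hoffman constant as a quantity that is truly independent of $V$ and an explicit function of $T$ alone — precisely the robustness analysis of the $\NS$ linear program deferred to \cite{B-F-S,AF-R-V} — whereas the defect bound, the passage from defect to approximate no-signalling, and the value comparison are all elementary.
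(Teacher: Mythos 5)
Your proposal is correct and follows essentially the same route as the paper: bound the $T$-averaged normalization defect of an optimal $\SNOS$ strategy by $\epsilon$, deduce that the no-signalling marginal constraints are violated by $O(\epsilon)$ in $T$-weighted $\ell_1$, invoke the LP-stability/robustness parameter $\Gamma=\Gamma(T)$ of \cite{B-F-S,AF-R-V} to obtain an exact no-signalling correlation within $\Gamma\epsilon$, and compare values to get $1-(\Gamma+1)\epsilon$. The only structural difference is that the paper first ``bumps up'' the subnormalized $P$ to a properly normalized conditional distribution $P'$ (which, dominating $P$ pointwise, loses nothing in value) before invoking the cited stability result, which is stated for normalized correlations, whereas you fold the normalization defect into the constraint violations handled by a Hoffman-type bound applied directly to $P$ --- a harmless variation, since both arguments defer the identical quantitative crux to the same external robustness analysis of the $\NS$ polytope.
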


\begin{proof}
Take an optimal strategy $P\in\SNOS(\underline{\cA}|\underline{\cX})$,
so that $P(a_I|\underline{x}) \leq Q(a_I|x_I)$ for all $I$, $a_I$, $\underline{x}$. Then,
\[
  \sum_{\underline{a},\underline{x}} T(\underline{x}) P(\underline{a}|\underline{x})
                                                \geq \sum_{\underline{a},\underline{x}} T(\underline{x})V(\underline{a},\underline{x}) P(\underline{a}|\underline{x}) = \omega_{\SNOS}(G) \geq 1-\epsilon.
\]
And so we get, for all $I$,
\[
%\begin{split}
  \bigl\| T_{\underline{\cX}}P_{\cA_I|\underline{\cX}} - T_{\underline{\cX}}Q_{\cA_I|\cX_I} \bigr\|_1
    =    \sum_{a_I,\underline{x}}  T(\underline{x}) \bigl( Q(a_I|x_I) - P(a_I|\underline{x}) \bigr)
    \leq \epsilon,
%\end{split}
\]
because the difference term in the sum is non-negative.

Now simply ``bump up'' the sub-normalized probability distribution
$P_{\underline{\cA}|\underline{\cX}}$ to a
properly normalized conditional probability distribution
$P_{\underline{\cA}|\underline{\cX}}'$, adding at most an averaged weight over $T_{\underline{\cX}}$ of $\epsilon$, and hence, for all $I$,
\[
  \frac12 \bigl\|  T_{\underline{\cX}}P_{\cA_I|\underline{\cX}}' -  T_{\underline{\cX}}Q_{\cA_I|\cX_I} \bigr\|_1 \leq \epsilon.
\]

At this point we can invoke the stability of linear programs,
used in~\cite{B-F-S} and~\cite{AF-R-V} to conclude that there
is $\Gamma = \Gamma(T)\geq 0$ such that there is a no-signalling
correlation $P_{\underline{\cA}|\underline{\cX}}'' \in \NS(\underline{\cA}|\underline{\cX})$ with
\[
  \frac12 \bigl\| T_{\underline{\cX}}P_{\underline{\cA}|\underline{\cX}}''
                   - T_{\underline{\cX}}P_{\underline{\cA}|\underline{\cX}}' \bigr\|_1 \leq \Gamma\epsilon.
\]
This gives
\begin{align*} \omega_{\NS}(G) \geq & \sum_{\underline{a},\underline{x}} T(\underline{x})V(\underline{a},\underline{x}) P''(\underline{a}|\underline{x})\\
\geq & \sum_{\underline{a},\underline{x}} T(\underline{x})V(\underline{a},\underline{x}) P'(\underline{a}|\underline{x}) -\Gamma\epsilon \\
\geq & \sum_{\underline{a},\underline{x}} T(\underline{x})V(\underline{a},\underline{x}) P(\underline{a}|\underline{x}) -\Gamma\epsilon \\
\geq &\, 1-(\Gamma+1)\epsilon,
\end{align*}
where we have used the total variational bound on $P''-P'$, the fact that
$P'$ dominates $P$ and the assumption on the probability of winning $G$ when played $P$.
\end{proof}

The rest of the paper is structured as follows: In Section~\ref{sec:main}
we introduce parallel repetition of games, and state our main
results, which improve upon, and partly clarify, earlier findings
by Holenstein~\cite{Hol}, Buhrman \emph{et al.}~\cite{B-F-S}
and Arnon-Friedman \emph{et al.}~\cite{AF-R-V}.
In Section~\ref{sec:constrained-postselection}, we present the
main technical tool, one of the constrained de Finetti reductions from \cite{L-W},
adapted to our present needs, followed by the proofs of the
main theorems and corollaries in Section~\ref{sec:proofs}. We
conclude in Section~\ref{sec:end}.

\section{Parallel repetition: definitions and main results}
\label{sec:main}

Given an $\ell$-player game $G$, with probability distribution
$T(\underline{x})$ on $\underline{\cX}$ and binary
predicate $V(\underline{a},\underline{x}) \in \{0,1\}$
on $\underline{\cA}\times\underline{\cX}$, we are interested
in playing the same game $n$ times independently in parallel,
and in looking at the probability of winning all $n$ or a subset of $t$ of them.

Formally, the \emph{$n$-fold parallel repetition of $G$}
is the $\ell$-player game $G^n$ having the product probability distribution on $\underline{\cX}^n$
\[ T^{\otimes n}(\underline{x}^n)=T\bigl(\underline{x}^{(1)})\cdots T(\underline{x}^{(n)}\bigr), \]
and the product binary predicate on $\underline{\cA}^n\times\underline{\cX}^n$
\[ V^{\otimes n}(\underline{a}^n,\underline{x}^n)
  =V\bigl(\underline{a}^{(1)},\underline{x}^{(1)})\cdots V(\underline{a}^{(n)},\underline{x}^{(n)}\bigr) \in \{0,1\}. \]
The no-signalling, resp.~sub-no-signalling, value of this $n$-fold parallel repetition game,
denoted $\omega_{\NS}(G^n)$, resp.~$\omega_{\SNOS}(G^n)$, is thus the maximum
of the winning probability
\[ \PP\left(\text{win}\right) = \sum_{\underline{a}^n,\underline{x}^n} T^{\otimes n}(\underline{x}^n) V^{\otimes n}(\underline{a}^n,\underline{x}^n)
                                                                          P(\underline{a}^n|\underline{x}^n) \]
over all $P\in\NS(\underline{\cA}^n|\underline{\cX}^n)$, resp.~$P\in \SNOS(\underline{\cA}^n|\underline{\cX}^n)$.

In words, the players win $G^n$ if they win all $n$ instances of $G$ played
in parallel. So we obviously always have (for the allowed set of strategies
being $X \in \{\NS,\SNOS\}$)
\begin{equation}
  \label{eq:G-G^n}
  \big(\omega_X(G)\big)^n \leq \omega_X(G^n) \leq \omega_X(G).
\end{equation}
However, in the case where $\omega_X(G)<1$, the gap between the lower and
upper bounds in equation~\eqref{eq:G-G^n} grows exponentially with $n$, making
equation \eqref{eq:G-G^n} very little informative. The parallel repetition problem is thus the following:
If none of the players' allowed strategies can make them win $1$ instance
of $G$ with probability $1$, does it necessarily imply that they have an
exponentially decaying probability of winning $n$ of them at the same time?
And if so at which rate?

More generally, we can study the game $G^{t/n}$, whose winning predicate is
defined as winning any $t$ (or more) out of $n$ repetitions \cite{B-F-S}, i.e.~\[ V^{t/n}(\underline{a}^n,\underline{x}^n) := \left\{ \sum_{i=1}^n V\bigl(\underline{a}^{(i)},\underline{x}^{(i)}\bigr) \geq t \right\} = \begin{cases} 1 & \text{ if }\sum_{i=1}^n V\bigl(\underline{a}^{(i)},\underline{x}^{(i)}\bigr) \geq t, \\
0 & \text{ otherwise}. \end{cases} \]
Note that, with our notation, $G^n = G^{n/n}$.

\medskip
The main results of the present paper are gathered below, where we set $C_{\ell}:=2^{\ell+1}-3$.

\begin{theorem}[Parallel repetition of sub-no-signalling $\ell$-player games] \label{th:repetitionSNOS}
Let $G$ be an $\ell$-player game such that
$\omega_{\SNOS}(G)\leq 1-\delta$ for some $0<\delta<1$.
Then, for any $n\in\NN$, and any $t \geq (1-\delta+\alpha)n$ for some $0<\alpha\leq\delta$, we have
\begin{align*}
& \omega_{\SNOS}(G^n) \leq \left(1-\frac{\delta^2}{5C_{\ell}^2}\right)^n,\\
& \omega_{\SNOS}(G^{t/n}) \leq \exp\left( -n \frac{\alpha^2}{5C_{\ell}^2} \right).
\end{align*}
\end{theorem}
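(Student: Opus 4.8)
The plan is to replace the heavily correlated optimal $n$-fold strategy by a mixture of i.i.d.\ single-round strategies, which we fully understand through the hypothesis $\omega_{\SNOS}(G)\le 1-\delta$, and then to extract exponential decay from the resulting independence via a Chernoff-type bound. The first step is symmetrization: the query distribution $T^{\otimes n}$ and both predicates $V^{\otimes n}$ and $V^{t/n}$ are invariant under permutations of the $n$ rounds, and $\SNOS(\underline{\cA}^n|\underline{\cX}^n)$ is a convex set stable under such permutations, so averaging any optimal $P$ over the symmetric group produces a permutation-invariant $\SNOS$ strategy with the same winning probability. I may therefore assume from the outset that the optimizer $P$ is permutation-invariant.

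Second, I would apply the flexible constrained de Finetti reduction of \cite{L-W}, to be set up in Section~\ref{sec:constrained-postselection}. For permutation-invariant $P$ this yields a pointwise bound $P(\underline{a}^n|\underline{x}^n)\le g(n)\int Q_\sigma^{\otimes n}(\underline{a}^n|\underline{x}^n)\,d\mu(\sigma)$, where $g(n)=\operatorname{poly}(n)$ and the single-round components $Q_\sigma$ satisfy the $\SNOS$ constraints up to a controlled error $\eta$; ``constrained'' refers precisely to this control. As the bound is pointwise and $T^{\otimes n}V^{t/n}\ge 0$, it transfers directly to winning probabilities, giving $\omega_{\SNOS}(G^{t/n})\le g(n)\int\bigl(\sum_{\underline{a}^n,\underline{x}^n} T^{\otimes n}V^{t/n}Q_\sigma^{\otimes n}\bigr)\,d\mu(\sigma)$.

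Third comes the reconstruction step, which is where $\SNOS$ outperforms $\NS$. Using the stability property announced in the introduction --- a correlation violating the $\SNOS$ constraints by at most $\eta$ is within $C_\ell\eta$ in total variation of a genuine $\SNOS$ correlation, with $C_\ell$ depending only on $\ell$ --- each component $Q_\sigma$ lies close to some bona fide single-round $\widetilde Q_\sigma\in\SNOS$ of value at most $\omega_{\SNOS}(G)\le 1-\delta$. Since $V\le 1$, the single-round win weight $p_\sigma:=\sum_{\underline{a},\underline{x}}T(\underline{x})V(\underline{a},\underline{x})Q_\sigma(\underline{a}|\underline{x})$ is then at most $1-\delta+C_\ell\eta$. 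Crucially I need not replace $Q_\sigma^{\otimes n}$ by $\widetilde Q_\sigma^{\otimes n}$ at the product level (which would cost $\sim n\eta$): because $T^{\otimes n}Q_\sigma^{\otimes n}$ factorizes over rounds and has total mass at most $1$, a one-sided Chernoff argument on the number of won rounds $W=\sum_i V(\underline{a}^{(i)},\underline{x}^{(i)})$ uses only the per-round weight $p_\sigma$; choosing $\eta$ small enough that $C_\ell\eta\le\alpha/2$, one obtains $\sum T^{\otimes n}V^{t/n}Q_\sigma^{\otimes n}\le\exp(-cn\alpha^2)$ uniformly in $\sigma$, for an absolute constant $c$.

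Combining the three steps gives $\omega_{\SNOS}(G^{t/n})\le g(n)\exp(-cn\alpha^2)$; since $g(n)$ is subexponential it is absorbed into the rate, and tracking the losses from the $\ell$-dependent reconstruction constant $C_\ell$, from the smallness required of $\eta$ relative to $\alpha$, and from absorbing $g(n)$, is exactly what degrades the clean exponent to $\alpha^2/(5C_\ell^2)$. The ``win all'' bound is the case $t=n$, i.e.\ $\alpha=\delta$; phrasing the Chernoff estimate there in its sharper multiplicative form yields the base-of-exponential expression $\bigl(1-\delta^2/(5C_\ell^2)\bigr)^n$ rather than $\exp(-n\delta^2/(5C_\ell^2))$. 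I expect the main obstacle to lie in the interplay between the de Finetti reduction and reconstruction: the reduction only delivers approximately-$\SNOS$ components, so the argument closes with a nontrivial exponent only because the stability constant $C_\ell$ is finite and $\ell$-dependent (letting $\eta$ be a fixed fraction of $\alpha$ while $g(n)$ stays subexponential) --- a state of affairs that, for $\NS$ with three or more players, is not known to hold.
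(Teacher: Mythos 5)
Your overall route is indeed the paper's: symmetrize the optimal strategy, apply the constrained de Finetti reduction of \cite{L-W}, use the Holenstein-type stability lemma (Lemma \ref{lemma:Hol-SNOS}, with $C_\ell=2^{\ell+1}-3$) to pass from approximately-$\SNOS$ single-round components to genuine $\SNOS$ ones, and finish with i.i.d.\ concentration. But your statement of the key reduction contains a genuine defect. Lemma \ref{lemma:deFinetti} does \emph{not} deliver a mixture supported only on components satisfying the $\SNOS$ constraints up to error $\eta$; it delivers a mixture over \emph{all} product distributions $Q^{\otimes n}$, each carrying the weight $\widetilde{F}(Q)^{2n}$. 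The proof must therefore split the integral into components $Q$ within trace distance $\eta$ of the $\SNOS$-compatible set (weight bounded by $1$, per-round value at most $1-\delta+2C_\ell\eta$ after reconstruction --- you also dropped a factor $2$ here, which is immaterial) and the far components, which cannot be discarded but whose weight is at most $(1-\eta^2)^n$ by Fuchs--van de Graaf. The term you omitted is in fact the \emph{dominant} one: with your choice $C_\ell\eta\leq\alpha/2$, your Hoeffding bound on the good part decays like $\exp(-n\alpha^2/2)$, with no $\ell$-dependence at all, so in your architecture nothing produces the $C_\ell^2$ in the exponent. It is the balancing of $\exp\left[-2n(\alpha-2C_\ell\eta)^2\right]$ against the missing tail $(1-\eta^2)^n\leq\exp(-n\eta^2)$ that forces $\eta\approx\alpha/(2C_\ell)$ and yields the rate $\alpha^2/(5C_\ell^2)$; your closing attribution of the degradation to ``the smallness required of $\eta$'' and ``absorbing $g(n)$'' cannot be made coherent without reinstating that tail term.

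A second gap concerns the prefactor. The theorem is claimed for \emph{every} $n\in\NN$ with no prefactor, and the polynomial here has degree of order $|\underline{\cA}|^2|\underline{\cX}|^2$, so ``absorbing $g(n)$ into the rate'' only yields the bound for $n$ beyond an alphabet-dependent threshold (for small $n$ the right-hand side with prefactor exceeds $1$, hence is vacuous), contrary to the universal, alphabet-independent statement. The paper removes the prefactor exactly, via supermultiplicativity of the repeated value: if $\omega_{\SNOS}(G^N)\geq C\left(1-\delta^2/(5C_\ell^2)\right)^N$ for some $N$, then $\omega_{\SNOS}(G^{Nn})\geq C^n\left(1-\delta^2/(5C_\ell^2)\right)^{Nn}$ for all $n$, and comparing with the $\mathrm{poly}(Nn)$-prefactor bound as $n\to\infty$ forces $C\leq 1$. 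Both defects are repairable --- keep the exponentially weighted tail and balance $\eta$ against it, and remove the prefactor by the multiplicativity trick --- and with those repairs your argument coincides with the paper's proof.
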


As immediate consequences or refinements of Theorem \ref{th:repetitionSNOS}, we can get
parallel repetition results for the no-signalling value of multiplayer games
in some particular instances.

\begin{corollary}[Parallel repetition of no-signalling full support $\ell$-player games, cf.~\cite{B-F-S,AF-R-V}]
\label{cor:repetitionSNOS}
Let $G$ be an $\ell$-player game whose distribution $T$ has full support,
and such that $\omega_{\NS}(G)\leq 1-\delta$ for some $0<\delta<1$.
Then, for any $n\in\NN$, and any $t \geq (1-\delta+\alpha)n$ for some $0<\alpha\leq\delta$, we have
\begin{align*}
& \omega_{\NS}(G^n) \leq \left(1-\frac{\delta^2}{5C_{\ell}^2(\Gamma+1)^2}\right)^n,    \\
& \omega_{\NS}(G^{t/n}) \leq \exp\left( -n \frac{\alpha^2}{5C_{\ell}^2(\Gamma+1)^2} \right),
\end{align*}
where $\Gamma=\Gamma(T)\geq 0$ is the constant from Lemma \ref{lemma:SNOS-NS-multiplayers},
which only depends on $T$.
\end{corollary}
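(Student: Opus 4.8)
The plan is to deduce the corollary directly from Theorem~\ref{th:repetitionSNOS}, by first comparing the no-signalling and sub-no-signalling values of the \emph{single} game $G$, and then exploiting that a winning probability can only decrease when one restricts the allowed strategies from $\SNOS$ to $\NS$. Concretely, I would combine three ingredients: the quantitative value comparison of Lemma~\ref{lemma:SNOS-NS-multiplayers}, the inclusion $\NS(\underline{\cA}^n|\underline{\cX}^n)\subseteq\SNOS(\underline{\cA}^n|\underline{\cX}^n)$ (which holds for the repeated alphabets as well, by the remark after Definition~\ref{def:NS-SNOS}), and the sub-no-signalling parallel repetition theorem applied with a suitably rescaled gap parameter.

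First I would convert the hypothesis on $\omega_{\NS}(G)$ into one on $\omega_{\SNOS}(G)$ by contraposing Lemma~\ref{lemma:SNOS-NS-multiplayers}. Writing $\epsilon:=1-\omega_{\SNOS}(G)$, the lemma gives $\omega_{\NS}(G)\geq 1-(\Gamma+1)\epsilon$; combined with the assumption $\omega_{\NS}(G)\leq 1-\delta$ this forces $(\Gamma+1)\epsilon\geq\delta$, hence
\[ \omega_{\SNOS}(G)\leq 1-\delta',\qquad \delta':=\frac{\delta}{\Gamma+1}. \]
Since $0<\delta<1$ and $\Gamma\geq 0$, we have $0<\delta'\leq\delta<1$, so $\delta'$ is a legitimate gap parameter for Theorem~\ref{th:repetitionSNOS}.

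Next I would feed $\delta'$ into Theorem~\ref{th:repetitionSNOS} and use monotonicity of the value under $\NS\subseteq\SNOS$ at the level of the $n$-fold game. For the ``win all'' statement this is immediate: $\omega_{\NS}(G^n)\leq\omega_{\SNOS}(G^n)\leq\bigl(1-(\delta')^2/(5C_\ell^2)\bigr)^n$, and substituting $\delta'=\delta/(\Gamma+1)$ reproduces the first displayed bound, since $(\delta')^2=\delta^2/(\Gamma+1)^2$; note that this case ($t=n$) carries no threshold subtlety. For the fraction statement I would likewise start from $\omega_{\NS}(G^{t/n})\leq\omega_{\SNOS}(G^{t/n})$ and invoke the concentration part of the theorem, aiming for the rescaled deviation $\alpha':=\alpha/(\Gamma+1)$ so that the exponent $(\alpha')^2/(5C_\ell^2)$ becomes exactly $\alpha^2/(5C_\ell^2(\Gamma+1)^2)$, with $0<\alpha'\leq\delta'$ following from $0<\alpha\leq\delta$.

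The step I expect to be the real obstacle is matching the fractional threshold hypotheses. To apply the concentration bound of Theorem~\ref{th:repetitionSNOS} with gap $\delta'$ and deviation $\alpha'$ one needs $t\geq(1-\delta'+\alpha')n$, and with $\alpha'=\alpha/(\Gamma+1)$ this reads $t\geq\bigl(1-(\delta-\alpha)/(\Gamma+1)\bigr)n$. Because $1-(\delta-\alpha)/(\Gamma+1)\geq 1-(\delta-\alpha)$ whenever $\Gamma>0$ and $\alpha<\delta$, this is a \emph{stronger} demand than the assumed $t\geq(1-\delta+\alpha)n$, so the rescaling $\delta\mapsto\delta/(\Gamma+1)$ does not propagate transparently to the winning threshold. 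I would therefore track this arithmetic explicitly: either reparametrising the deviation in terms of the slack $t/n-(1-\delta')$ that the hypothesis actually affords and checking the induced exponent against the advertised one, or reading the corollary's $\alpha$ and threshold through the same $1/(\Gamma+1)$ rescaling as $\delta$ (and using that $\omega_{\SNOS}(G^{t/n})$ is non-increasing in $t$). Apart from this bookkeeping, the derivation is a direct chaining of Lemma~\ref{lemma:SNOS-NS-multiplayers}, the inclusion $\NS\subseteq\SNOS$, and Theorem~\ref{th:repetitionSNOS}.
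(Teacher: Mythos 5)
Your treatment of the first bound is correct and coincides with the paper's own proof: contrapose Lemma~\ref{lemma:SNOS-NS-multiplayers} to get $\omega_{\SNOS}(G)\leq 1-\delta'$ with $\delta'=\delta/(\Gamma+1)$, then chain $\omega_{\NS}(G^n)\leq\omega_{\SNOS}(G^n)$ with Theorem~\ref{th:repetitionSNOS}. The obstacle you flag for the concentration bound is also genuine --- the paper's one-line ``follows analogously'' hides exactly this point --- but your proposal stops at diagnosing it, and neither of the two repairs you sketch can close it. The slack the hypothesis affords is $\beta:=t/n-(1-\delta')\geq\alpha-\delta\Gamma/(\Gamma+1)$, which is non-positive whenever $\alpha\leq\delta\Gamma/(\Gamma+1)$ (a non-empty regime as soon as $\Gamma>0$), so Theorem~\ref{th:repetitionSNOS} is then simply inapplicable; and even when $\beta>0$, one has $\beta-\alpha/(\Gamma+1)=(\alpha-\delta)\Gamma/(\Gamma+1)<0$ for $\alpha<\delta$, so the induced exponent $\beta^2/(5C_\ell^2)$ is strictly weaker than the advertised $\alpha^2/(5C_\ell^2(\Gamma+1)^2)$. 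Your second repair proves a different statement (the corollary with the stronger threshold $(1-\delta'+\alpha')n$), not the one claimed. In fact \emph{no} black-box invocation of Theorem~\ref{th:repetitionSNOS} can succeed here: when $\alpha<\delta\Gamma/(\Gamma+1)$ the threshold $(1-\delta+\alpha)n$ lies below $(1-\delta')n$, a regime in which the theorem says nothing about $\omega_{\SNOS}(G^{t/n})$ at all.

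What is needed --- and how ``analogously'' must be read --- is to re-run the \emph{proof} of Theorem~\ref{th:repetitionSNOS} with its reconstruction step upgraded from sub-no-signalling to no-signalling, which is where the full-support hypothesis and $\Gamma$ actually enter. For $Q_{\underline{\cA}\underline{\cX}}\in\mathcal{P}_\epsilon$, Lemma~\ref{lemma:Hol-SNOS} gives $R'\in\SNOS(\underline{\cA}|\underline{\cX})$ with $\frac12\|T_{\underline{\cX}}R'-Q\|_1\leq C_\ell\epsilon$; since $Q$ is a normalized p.d., $R'$ has $T$-average total mass at least $1-2C_\ell\epsilon$, so it can be bumped up to a normalized correlation still satisfying the no-signalling constraints up to error $2C_\ell\epsilon$, and the LP-stability argument inside the proof of Lemma~\ref{lemma:SNOS-NS-multiplayers} (the very definition of $\Gamma$) then produces $R''\in\NS(\underline{\cA}|\underline{\cX})$ with $\frac12\|T_{\underline{\cX}}R''-Q\|_1\leq 2C_\ell(\Gamma+1)\epsilon$. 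Consequently every $Q\in\mathcal{P}_\epsilon$ has single-round winning probability at most $1-\delta+2C_\ell(\Gamma+1)\epsilon$, i.e.\ the deviation is now measured from the no-signalling gap $\delta$, which is what the hypothesis $t\geq(1-\delta+\alpha)n$ matches. Hoeffding then gives $\exp\left[-2n\bigl(\alpha-2C_\ell(\Gamma+1)\epsilon\bigr)^2\right]$ for the good $Q$'s, the bad $Q$'s contribute $\exp\left[-n\epsilon^2\right]$ as before, and the choice of $\epsilon$ and the removal of the polynomial prefactor go through verbatim with $C_\ell$ replaced by $C_\ell(\Gamma+1)$, yielding exactly the advertised exponent $\alpha^2/(5C_\ell^2(\Gamma+1)^2)$.
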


Note that the constant $\Gamma$ in this corollary depends on the game,
and in the worst case carries a heavy dependence on the players' alphabet
sizes. This is in contrast to Holenstein's two-player result for no-signalling games,
which has no alphabet dependence at all \cite{Hol}. This is generalized
in our Theorem~\ref{th:repetitionSNOS}, since for two players we know by
Lemma \ref{lemma:2-player-SNOS} that $\NS \equiv \SNOS$,
and we could directly read off bounds with constants already improving on
Holenstein's. Looking a little into the proof allows us to optimize the
constants even more, which we record as follows.

\begin{theorem}[Parallel repetition of no-signalling $2$-player games, cf.~\cite{Hol}]
\label{thm:2-player}
Let $G$ be a $2$-player game with
$\omega_{\NS}(G)\leq 1-\delta$ for some $0<\delta<1$.
Then, for any $n\in\NN$, and any $t \geq (1-\delta+\alpha)n$ for some $0<\alpha\leq\delta$, we have
\begin{align*}
& \omega_{\NS}(G^n) \leq \left(1-\frac{\delta^2}{27}\right)^n,    \\
& \omega_{\NS}(G^{t/n}) \leq \exp\left( -n \frac{\alpha^2}{33} \right).
\end{align*}
\end{theorem}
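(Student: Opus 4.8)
The plan is to leverage the two-player equivalence $\NS \equiv \SNOS$ established in Lemma~\ref{lemma:2-player-SNOS}, and then to sharpen the constants by reopening the proof of Theorem~\ref{th:repetitionSNOS}. First I would observe that $G$, its $n$-fold repetition $G^n$, and the threshold game $G^{t/n}$ are all genuinely \emph{two-player} games: Player $1$ holds the bundle of all left inputs and outputs, Player $2$ the right ones. Lemma~\ref{lemma:2-player-SNOS} therefore applies to each of them, so every sub-no-signalling strategy is dominated pointwise by a no-signalling one, whence $\omega_{\NS}(G)=\omega_{\SNOS}(G)$, and likewise $\omega_{\NS}(G^n)=\omega_{\SNOS}(G^n)$ and $\omega_{\NS}(G^{t/n})=\omega_{\SNOS}(G^{t/n})$. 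Consequently the hypothesis $\omega_{\NS}(G)\leq 1-\delta$ reads $\omega_{\SNOS}(G)\leq 1-\delta$, and Theorem~\ref{th:repetitionSNOS} applies verbatim with $\ell=2$, i.e.~$C_2=2^{3}-3=5$ and $5C_2^2=125$. This already yields $\omega_{\NS}(G^n)\leq(1-\delta^2/125)^n$ and $\omega_{\NS}(G^{t/n})\leq\exp(-n\alpha^2/125)$ --- the correct qualitative statement, but with worse constants than claimed.

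To improve $125$ to $27$ and $33$, I would track exactly where the factor $C_\ell$ enters the proof of Theorem~\ref{th:repetitionSNOS}. That factor originates solely in the reconstruction (stability) step, where an i.i.d.~de Finetti component that satisfies the sub-no-signalling constraints only up to some error $\epsilon$ must be replaced by an \emph{exactly} sub-no-signalling correlation at distance $O(C_\ell\epsilon)$. For $\ell\geq 3$ the constraints for different index sets $I$ overlap and can be mutually ``frustrated'' (cf.~Example~\ref{ex}), which is what forces the large constant $C_\ell=2^{\ell+1}-3$. For $\ell=2$, however, the only proper nonempty index sets are the disjoint singletons $\{1\}$ and $\{2\}$, so the two marginal constraints can be repaired independently by the explicit pointwise ``bumping-up'' of Lemma~\ref{lemma:2-player-SNOS}; this yields a reconstruction error linear in $\epsilon$ with a much smaller constant. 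Substituting this improved reconstruction into the concentration estimate and re-balancing the free parameters --- the de Finetti approximation radius $\epsilon$ against the margin $\alpha$ (resp.~$\delta$) left for the Hoeffding-type bound --- produces the sharper denominators.

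The main obstacle is precisely this re-optimization: the final constant is not literally $5C_\ell^2$ but a more refined expression in the reconstruction constant and the de Finetti prefactor, of which $5C_\ell^2$ is only a clean upper bound, so one must recompute it for the two-player reconstruction rather than bound it crudely. A secondary subtlety is that the all-rounds and the fractional-threshold cases are optimized differently. The former is estimated as a genuine $n$-fold product, giving the base $1-\delta^2/27$; the latter passes through a Chernoff/Hoeffding concentration over the window $t\geq(1-\delta+\alpha)n$, which costs a little extra slack and accounts for the slightly larger denominator $33$ (indeed, since $(1-\delta^2/27)^n\leq\exp(-n\delta^2/27)$ beats the threshold bound specialized to $\alpha=\delta$, the two are genuinely distinct estimates). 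Carrying both balancings through with the two-player reconstruction constant is what delivers the stated bounds.
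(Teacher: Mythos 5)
Your first paragraph is sound, and it is exactly how the paper frames the two-player case: Lemma~\ref{lemma:2-player-SNOS} turns the hypothesis into $\omega_{\SNOS}(G)\leq 1-\delta$, and Theorem~\ref{th:repetitionSNOS} with $C_2=5$ then gives the correct qualitative statement with constant $5C_2^2=125$. The genuine gap lies in your mechanism for improving $125$ to $27$ and $33$. You attribute the improvement to replacing the reconstruction constant $C_2=5$ by a ``much smaller'' one, on the grounds that for $\ell=2$ the two marginal constraints are disjoint and can be repaired independently via the bumping-up of Lemma~\ref{lemma:2-player-SNOS}. This is unsubstantiated, and it is not what happens: Lemma~\ref{lemma:2-player-SNOS} manipulates correlations satisfying the SNOS constraints \emph{exactly}, not up to trace distance $\epsilon$, and repairing the $\cA_1$-marginal of a $Q\in\mathcal{P}_{\epsilon}$ generically disturbs its $\cA_2$-marginal; that interference is precisely why the Holenstein-type reconstruction (Lemma~\ref{lemma:Hol-SNOS}) costs $2\epsilon_I$ per constraint, i.e.~$\epsilon_{\emptyset}+2\epsilon_{\{1\}}+2\epsilon_{\{2\}}\leq 5\epsilon$, even for two players. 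Moreover, a quick balancing shows that your route would need a reconstruction constant below $\sqrt{27}/2\approx 2.6$ to reach the number $27$, and you give no argument for any such bound.

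What the paper actually does is keep the reconstruction constant $5$ but exploit a different two-player feature, which your proposal never identifies: for $\ell=2$, Lemma~\ref{lemma:Hol-SNOS} produces a genuinely \emph{no-signalling}, hence normalized, correlation $R'$ with $\frac12 \bigl\| T_{\underline{\cX}}R'_{\underline{\cA}|\underline{\cX}}-Q_{\underline{\cA}\underline{\cX}} \bigr\|_1\leq 5\epsilon$. Since $Q$ and $T_{\underline{\cX}}R'_{\underline{\cA}|\underline{\cX}}$ are then both probability distributions, the winning probability under $Q$ exceeds that under $T_{\underline{\cX}}R'_{\underline{\cA}|\underline{\cX}}$ by at most the total-variation distance $5\epsilon$, whereas in the general SNOS argument of Theorem~\ref{th:repetitionSNOS} one must pay the full $1$-norm, $2C_{\ell}\epsilon=10\epsilon$, because there $T_{\underline{\cX}}R'_{\underline{\cA}|\underline{\cX}}$ is only sub-normalized. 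Inserting $1-\delta+5\epsilon$ (resp.~$\alpha-5\epsilon$) into the two branches of the split bound and optimizing exactly --- $\epsilon=(\sqrt{29}-5)\delta/2$, so that $\epsilon^2\geq\delta^2/27$, and $\epsilon=(10-\sqrt{2})\alpha/49$, so that $\epsilon^2\geq\alpha^2/33$ --- yields the stated constants, with the polynomial prefactor removed by the multiplicativity trick you already cite. Without the normalization observation and these explicit optimizations, your argument proves the theorem only with $125$ in place of $27$ and $33$, not the statement as claimed.
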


\section{Constrained de Finetti reduction}
\label{sec:constrained-postselection}

De Finetti reductions are a useful tool when trying to understand any permutation-invariant information processing task. Indeed, these enable to restrict the analysis to that of i.i.d.~scenarios, which are usually trivially understood. In the context of multi-player games played $n$ times in parallel, one would like to use the fact that the numbering of the $n$ instances of the repeated game is irrelevant to reduce the study of strategies for the latter to the study of so-called
\emph{de Finetti strategies} (i.e.~convex combinations of $n$ i.i.d.~strategies).

The seminal de Finetti reduction (aka post-selection) lemma was stated in \cite{C-K-R}, later finding applications in many areas of quantum information theory, from quantum cryptography \cite{L-GP-R-C} to quantum Shannon theory \cite{B-C-R}. Our proofs though, will rely on two more recently established de Finetti reduction results, which are stated below. Just to fix some definitions: we will say that a (sub-)probability distribution $P_{\cZ^n}$, resp.~a conditional (sub-)probability distribution $P_{\cB^n|\cY^n}$, is $n$-symmetric if for any permutation $\pi$ of $n$ elements, $\forall\ z^n,\ P(\pi(z^n))=P(z^n)$, resp.~$\forall\ b^n,y^n,\ P(\pi(b^n)|\pi(y^n))=P(b^n|y^n)$.

\begin{lemma}[de Finetti reduction for conditional p.d.'s, \cite{AF-R}]
\label{lemma:dF-conditional}
Let $\cB,\cY$ be finite alphabets. There exists a probability measure $dR_{\cB|\cY}$ on the set
of conditional probability distributions $R_{\cB|\cY}$ such that, for any $n$-symmetric
conditional probability distribution $P_{\cB^n|\cY^n}$,
\[
  P_{\cB^n|\cY^n} \leq \mathrm{poly}(n) \int_{R_{\cB|\cY}} R_{\cB|\cY}^{\otimes n}\,\mathrm{d}R_{\cB|\cY},
\]
where the polynomial pre-factor may be upper bounded as $\mathrm{poly}(n)\leq (n+1)^{|\cB||\cY|}$.
\end{lemma}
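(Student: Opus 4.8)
The plan is to derive the conditional statement from the corresponding \emph{unconditional} finite de Finetti reduction, but not by the most obvious route. Naively one might form the joint distribution $P(b^n|y^n)U(y^n)$ for some symmetric prior $U$ on $\cY^n$, apply the unconditional reduction to it, and divide back by $U(y^n)$; however any prior assigns some input string a probability as small as $|\cY|^{-n}$, so dividing reintroduces an exponential-in-$n$ factor and destroys the polynomial bound. The main obstacle is precisely this input dependence, and the way around it is to fix the input string $y^n$ and exploit the block structure it induces, producing at the end a single de Finetti measure on conditional distributions that does \emph{not} depend on $y^n$.

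First I would establish the unconditional reduction on a single alphabet $\cZ$: for any $m$-symmetric probability distribution $P_{\cZ^m}$ one has $P_{\cZ^m}\leq (m+1)^{|\cZ|-1}\int R^{\otimes m}\,dR$, where $dR$ is the normalised uniform (Lebesgue) measure on the probability simplex $\Delta(\cZ)$. This follows from the type decomposition: since $P$ is permutation invariant it is constant on each type class, so $P(z^m)\leq 1/\binom{m}{t}$ where $t$ is the type of $z^m$ and $\binom{m}{t}$ the multinomial coefficient; meanwhile the Dirichlet integral gives $\int R^{\otimes m}(z^m)\,dR=d!\,\prod_z t_z!/(m+d)!$ with $d=|\cZ|-1$, and the elementary inequality $(m+d)!/m!=\prod_{j=1}^d(m+j)\leq d!\,(m+1)^d$ closes the bound pointwise.

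Next I fix an input string $y^n$ and let $n_y$ be the number of positions carrying the symbol $y$, so $\sum_y n_y=n$. Joint $n$-symmetry of $P_{\cB^n|\cY^n}$ implies that, for permutations fixing $y^n$, i.e.~those acting within each input block, the conditional distribution $P(\cdot|y^n)$ is invariant; hence it is a probability distribution on $\prod_y\cB^{n_y}$ invariant under the product group $\prod_y S_{n_y}$. I then run the type argument blockwise: the orbit of a configuration under $\prod_y S_{n_y}$ has size $\prod_y\binom{n_y}{t^{(y)}}$, so $P(b^n|y^n)\leq\prod_y 1/\binom{n_y}{t^{(y)}}$, and multiplying the single-block estimate over the blocks yields $P(\cdot|y^n)\leq\bigl(\prod_y(n_y+1)^{|\cB|-1}\bigr)\int\bigotimes_y R_y^{\otimes n_y}\,\prod_y dR_y$, the measure being the product over $y$ of the uniform simplex measures on $\Delta(\cB)$.

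Finally I reassemble. A conditional distribution $R_{\cB|\cY}$ is exactly a tuple $(R_y)_{y\in\cY}$ of distributions on $\cB$, and for the fixed $y^n$ one has the identity $R_{\cB|\cY}^{\otimes n}(b^n|y^n)=\prod_y R_y^{\otimes n_y}(b^{(y)})$, where $b^{(y)}$ collects the outputs at the positions with input $y$. Thus the product measure $dR_{\cB|\cY}:=\prod_y dR_y$ on the space of conditional distributions (a product of simplices) is \emph{independent of $y^n$}, and the blockwise bound reads $P(b^n|y^n)\leq\bigl(\prod_y(n_y+1)^{|\cB|-1}\bigr)\int R_{\cB|\cY}^{\otimes n}(b^n|y^n)\,dR_{\cB|\cY}$ for every $y^n$. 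Bounding the $y^n$-dependent prefactor uniformly by $(n+1)^{|\cY|(|\cB|-1)}\leq(n+1)^{|\cB||\cY|}$ gives the claim with the stated polynomial pre-factor.
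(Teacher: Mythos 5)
The paper does not prove this lemma at all: it is imported verbatim from the cited reference [AF-R] and used as a black box, so there is no in-paper proof to compare yours against. Your argument is correct and self-contained, and it is essentially the proof given in [AF-R]: reduce to a fixed input string $y^n$, use invariance of $P(\cdot|y^n)$ under the stabilizer $\prod_y S_{n_y}$ to get the blockwise type-class bound $P(b^n|y^n)\leq\prod_y 1/\binom{n_y}{t^{(y)}}$, compare with the Dirichlet (uniform-simplex) integral block by block, and observe that the resulting product measure on tuples $(R_y)_{y\in\cY}$ is a $y^n$-independent measure on conditional distributions. All the individual steps check out: the single-alphabet estimate via $\binom{m+d}{d}\leq(m+1)^d$, the identification $R_{\cB|\cY}^{\otimes n}(b^n|y^n)=\prod_y R_y^{\otimes n_y}(b^{(y)})$, and the uniform prefactor bound. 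Two small points in your favor: you only ever use invariance under the stabilizer of each $y^n$, which is weaker than full joint $n$-symmetry, so your statement is marginally more general than the one quoted; and your prefactor $(n+1)^{|\cY|(|\cB|-1)}$ is slightly sharper than the stated $(n+1)^{|\cB||\cY|}$, which you correctly note dominates it.
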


\begin{lemma}[Constrained de Finetti reduction for (sub-)p.d.'s, \cite{L-W}]
\label{lemma:dF-constrained}
Let $\cZ$ be a finite alphabet. There exists a probability measure $dQ_{\cZ}$ on the
set of probability distributions $Q_{\cZ}$ on $\cZ$ such that, for any $n$-symmetric
(sub-)probability distribution $P_{\cZ^n}$ on $\cZ^n$,
\[
  P_{\cZ^n} \leq \mathrm{poly}(n) \int_{Q_{\cZ}} F\left(P_{\cZ^n},Q_{\cZ}^{\otimes n}\right)^2Q_{\cZ}^{\otimes n}\,\mathrm{d}Q_{\cZ},\]
where the polynomial pre-factor may be upper bounded as  $\mathrm{poly}(n)\leq (n+1)^{3|\cZ|^2}$.
\end{lemma}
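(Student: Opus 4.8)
The plan is to prove the asserted inequality \emph{pointwise} on $\cZ^n$, taking advantage of the fact that $n$-symmetry forces $P_{\cZ^n}$ to be a function of the \emph{type} (empirical frequency) alone. Writing $d:=|\cZ|$, for a string $z^n$ let $t=t(z^n)$ denote its type, so that $P(z^n)=p_t$ depends only on $t$, and observe that any product satisfies $Q^{\otimes n}(z^n)=\prod_z Q_z^{nt_z}$, which likewise depends only on $t$. I would fix once and for all the measure $dQ_{\cZ}$ to be the uniform (normalized Lebesgue) probability measure on the simplex of distributions on $\cZ$; crucially, this choice is independent of $P$, exactly as the statement requires.

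The key step is a one-term lower bound on the classical fidelity. Grouping the defining sum $F(P_{\cZ^n},Q^{\otimes n})=\sum_{w^n}\sqrt{P(w^n)Q^{\otimes n}(w^n)}$ by types and discarding all but the type-$t$ contribution (legitimate since every summand is non-negative) gives
\[ F\left(P_{\cZ^n},Q^{\otimes n}\right)\ \geq\ \binom{n}{nt}\sqrt{p_t}\,\prod_z Q_z^{nt_z/2}, \]
where $\binom{n}{nt}$ counts the strings of type $t$. Squaring, integrating against $Q^{\otimes n}(z^n)=\prod_z Q_z^{nt_z}$, and cancelling $p_t$ reduces the whole claim to verifying, for every type $t$,
\[ 1\ \leq\ \mathrm{poly}(n)\,\binom{n}{nt}^2 \int \prod_z Q_z^{2nt_z}\,dQ_{\cZ}. \]
Because $p_t$ cancels, this argument is insensitive to normalization and so handles sub-probability distributions verbatim, where only $\sum_t \binom{n}{nt}p_t\leq 1$ is available.

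Finally I would evaluate the Dirichlet moment $\int \prod_z Q_z^{2nt_z}\,dQ_{\cZ}=(d-1)!\,\prod_z(2nt_z)!\,/\,(2n+d-1)!$ and combine it with $\binom{n}{nt}^2=(n!)^2/\prod_z(nt_z)!^2$; the factorials collapse to $(n!)^2\prod_z\binom{2nt_z}{nt_z}$ times the explicit factor $(d-1)!/(2n+d-1)!$. Using the elementary estimates $\binom{2m}{m}\geq 4^m/(2m+1)$ and $(2n+d-1)!\leq(2n+d-1)^{d-1}(2n)!\leq(2n+d-1)^{d-1}4^n(n!)^2$, the exponential factors $4^n$ cancel exactly, leaving a purely polynomial lower bound and hence a prefactor polynomial in $n$ of degree $O(|\cZ|)$, comfortably within the claimed $(n+1)^{3|\cZ|^2}$. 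The main obstacle is precisely this bookkeeping: one must check that the super-exponential factorials are arranged so that the $4^n$ terms cancel with no residual exponential growth, and then track the surviving polynomial factors carefully enough to produce a single clean, $P$-independent prefactor valid uniformly over all types $t$.
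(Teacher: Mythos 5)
Your argument is correct, but note that there is no in-paper proof to compare it against: the paper imports Lemma \ref{lemma:dF-constrained} from the reference [L-W] (listed as ``in preparation'') and never proves it. Judged on its own merits, your proof is sound and complete. The reduction to a pointwise inequality per type is legitimate, since by $n$-symmetry both $P(z^n)$ and $Q^{\otimes n}(z^n)$ depend only on the type $t$ of $z^n$; the one-term fidelity bound $F\bigl(P_{\cZ^n},Q^{\otimes n}\bigr)\geq\binom{n}{nt}\sqrt{p_t}\prod_z Q_z^{nt_z/2}$ goes in the needed direction (a lower bound on $F$ strengthens the right-hand side); the Dirichlet moment for the normalized uniform measure on the simplex is the correct $(d-1)!\prod_z(2nt_z)!/(2n+d-1)!$ with $d=|\cZ|$; and the factorial bookkeeping does close, since $\binom{2m}{m}\geq 4^m/(2m+1)$ and $(2n)!\leq 4^n(n!)^2$ give, uniformly over all types $t$,
\[
  \binom{n}{nt}^2\int \prod_z Q_z^{2nt_z}\,\mathrm{d}Q \;\geq\; \frac{(d-1)!}{(2n+1)^d\,(2n+d-1)^{d-1}},
\]
so a prefactor of degree $2d-1$ suffices, comfortably inside the quoted $(n+1)^{3|\cZ|^2}$; the cancellation of $p_t$ indeed makes sub-normalization free. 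Your route also differs in a worthwhile way from the standard one in this literature (the type-decomposition argument underlying \cite{AF-R} and, as far as one can tell, \cite{L-W}): there, $\mathrm{d}Q$ is taken to be the uniform \emph{discrete} measure on the empirical types of denominator $n$, and the estimates come from type-class counting bounds such as $|T_t|\geq 2^{nH(t)}(n+1)^{-|\cZ|}$. Your choice of the Lebesgue (Dirichlet) measure buys two things: the measure is manifestly independent of $n$ as well as of $P$, exactly matching the quantifier order of the statement (the type measure achieves this only after an additional mixing over $n$, at a polynomial cost), and the exact Beta-integral evaluation replaces entropy estimates, yielding a clean prefactor of degree $O(|\cZ|)$ rather than one inherited from cruder counting.
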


In Lemma \ref{lemma:dF-constrained} above, as well as in the remainder of this paper, $F(P,Q)$ stands for the fidelity between probability distributions $P$ and $Q$, defined as $F(P,Q)=\|\sqrt{P}\sqrt{Q}\|_1$.

We are now ready to present the technical lemma that will allow us in Section \ref{sec:proofs} to reduce
the study of strategies for repeated games to the study of so-called
\emph{de Finetti strategies}, and hence prove our main results.

\begin{lemma}[de Finetti reduction for sub-no-signalling correlations]
\label{lemma:deFinetti}
There exists a probability measure $dQ$ on the set of probability distributions $Q$ on $\underline{\cA}\times\underline{\cX}$ such that for any probability distribution $T$ on $\underline{\cX}$ and
any $P\in \SNOS(\underline{\cA}^n|\underline{\cX}^n)$ an $n$-symmetric sub-no-signalling correlation,
it holds that
\begin{equation}
  \label{eq:definetti}
  T_{\underline{\cX}}^{\otimes n} P_{\underline{\cA}^n|\underline{\cX}^n}
       \leq \mathrm{poly}(n)
         \int_{Q_{\underline{\cA}\underline{\cX}}} \widetilde{F}\left(Q_{\underline{\cA}\underline{\cX}}\right)^{2n}
               Q_{\underline{\cA}\underline{\cX}}^{\otimes n}\,\mathrm{d}Q_{\underline{\cA}\underline{\cX}},
\end{equation}
where we defined
\[
  \widetilde{F}\left(Q_{\underline{\cA}\underline{\cX}}\right)
     := \underset{\emptyset \neq I\varsubsetneq[\ell]}{\min}\, \underset{R_{\cA_I|\cX_I}}{\max}\,
                 F\left(T_{\underline{\cX}}R_{\cA_I|\cX_I},Q_{\cA_I\underline{\cX}}\right).
\]
We mention for the sake of completeness that the $\mathrm{poly}(n)$
pre-factor in equation~\eqref{eq:definetti} may be upper bounded by
$(n+1)^{3|\underline{\cA}|^2|\underline{\cX}|^2+2|\underline{\cA}||\underline{\cX}|}$.
\end{lemma}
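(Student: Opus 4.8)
The plan is to combine the constrained de Finetti reduction of Lemma~\ref{lemma:dF-constrained} (which manufactures the squared fidelity) with a separate \emph{pointwise} estimate on that fidelity (which manufactures $\widetilde{F}$). First I would observe that, since $T$ is a genuine probability distribution on $\underline{\cX}$ while $P$ is a sub-normalised conditional density, the joint density $T_{\underline{\cX}}^{\otimes n}P_{\underline{\cA}^n|\underline{\cX}^n}$ is an $n$-symmetric sub-probability distribution on the alphabet $\cZ:=\underline{\cA}\times\underline{\cX}$ (symmetric because $T^{\otimes n}$ is symmetric and $P$ is assumed $n$-symmetric). Applying Lemma~\ref{lemma:dF-constrained} verbatim to $\cZ$ then yields
\[
  T_{\underline{\cX}}^{\otimes n}P_{\underline{\cA}^n|\underline{\cX}^n} \le \mathrm{poly}(n)\int_{Q}F\big(T_{\underline{\cX}}^{\otimes n}P_{\underline{\cA}^n|\underline{\cX}^n},\,Q^{\otimes n}\big)^2\,Q^{\otimes n}\,\mathrm dQ,
\]
with $\mathrm{poly}(n)\le(n+1)^{3|\underline{\cA}|^2|\underline{\cX}|^2}$. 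Comparing with the target~\eqref{eq:definetti}, everything reduces to the pointwise bound $F(T_{\underline{\cX}}^{\otimes n}P_{\underline{\cA}^n|\underline{\cX}^n},Q^{\otimes n})\le\mathrm{poly}(n)^{1/2}\,\widetilde{F}(Q)^n$ for each fixed $Q$: once this is in hand, substituting and merging the two polynomial prefactors gives the claim.

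To prove this fidelity bound I would fix one nonempty proper subset $\emptyset\neq I\subsetneq[\ell]$ and establish $F(T^{\otimes n}P,Q^{\otimes n})\le\mathrm{poly}(n)^{1/2}\big[\max_{R_{\cA_I|\cX_I}}F(T_{\underline{\cX}}R_{\cA_I|\cX_I},Q_{\cA_I\underline{\cX}})\big]^n$; since this holds for \emph{every} such $I$, taking the minimum over $I$ on the right reproduces $\widetilde{F}(Q)^n$. For fixed $I$, discard the answers of the players outside $I$: partial marginalisation is a stochastic map and the fidelity is monotone (non-decreasing) under stochastic maps, and the corresponding marginal of $Q^{\otimes n}$ is exactly $(Q_{\cA_I\underline{\cX}})^{\otimes n}$, so $F(T^{\otimes n}P,Q^{\otimes n})\le F\big(T_{\underline{\cX}}^{\otimes n}P_{\cA_I^n|\underline{\cX}^n},(Q_{\cA_I\underline{\cX}})^{\otimes n}\big)$. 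The sub-no-signalling constraint for $I$ now supplies a conditional probability distribution $Q'(\cdot|x_I^n)$ on $\cA_I^n$ with $P(a_I^n|\underline{x}^n)\le Q'(a_I^n|x_I^n)$; averaging $Q'$ over round-permutations and using the $n$-symmetry of $P$ I may take this dominating conditional to be $n$-symmetric, call it $S_{\cA_I^n|\cX_I^n}$. As $F$ is monotone in each argument, $F(T^{\otimes n}P,Q^{\otimes n})\le F\big(T_{\underline{\cX}}^{\otimes n}S_{\cA_I^n|\cX_I^n},(Q_{\cA_I\underline{\cX}})^{\otimes n}\big)$, and crucially $S$ depends on the inputs only through $x_I^n$ and is symmetric, so the conditional de Finetti reduction of Lemma~\ref{lemma:dF-conditional} applies to it.

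The heart of the argument, and the step I expect to be the main obstacle, is the final one: bounding $F\big(T^{\otimes n}S,(Q_{\cA_I\underline{\cX}})^{\otimes n}\big)$ \emph{from above} by the $n$-th power of a single-round fidelity. Applying Lemma~\ref{lemma:dF-conditional} to $S$ (with alphabets $\cA_I,\cX_I$) gives $T_{\underline{\cX}}^{\otimes n}S_{\cA_I^n|\cX_I^n}\le\mathrm{poly}(n)\int_R(T_{\underline{\cX}}R_{\cA_I|\cX_I})^{\otimes n}\,\mathrm dR$, i.e.\ a \emph{mixture} of i.i.d.\ distributions, each of which contributes $F(T_{\underline{\cX}}R_{\cA_I|\cX_I},Q_{\cA_I\underline{\cX}})^{n}$ by multiplicativity of the fidelity. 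The difficulty is that $F(\cdot,\sigma)$ is \emph{concave}, so a naive Jensen/Minkowski interchange estimates the fidelity of a mixture from below, i.e.\ the wrong way; one cannot simply pull the maximum outside the integral. The correct move is the subadditivity of the square root, $\sqrt{\sum_i a_i}\le\sum_i\sqrt{a_i}$, applied after recording that the de Finetti mixture may be taken supported on only $\mathrm{poly}(n)$ many ``types'' of $R$ (to resolution $1/n$, $R^{\otimes n}$ sees $R$ only through its type). This turns the mixture into a sum of at most $(n+1)^{|\cA_I||\cX_I|}$ products and costs one further factor of order $(n+1)^{|\cA_I||\cX_I|/2}$, yielding $F\big(T^{\otimes n}S,(Q_{\cA_I\underline{\cX}})^{\otimes n}\big)\le(n+1)^{|\cA_I||\cX_I|}\big[\max_R F(T_{\underline{\cX}}R_{\cA_I|\cX_I},Q_{\cA_I\underline{\cX}})\big]^n$. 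Minimising over $I$ gives the pointwise bound with $\mathrm{poly}(n)^{1/2}\le(n+1)^{|\underline{\cA}||\underline{\cX}|}$; squaring this and combining with the constrained reduction's $(n+1)^{3|\underline{\cA}|^2|\underline{\cX}|^2}$ reproduces the advertised prefactor $(n+1)^{3|\underline{\cA}|^2|\underline{\cX}|^2+2|\underline{\cA}||\underline{\cX}|}$. The single recurring point demanding care is the direction of each fidelity inequality—data processing, pointwise domination, and mixing all \emph{raise} $F$—and it is exactly the concavity of $F$ under mixing that forces the square-root/discretisation detour in place of a one-line estimate.
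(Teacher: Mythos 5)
Your proposal is correct and follows essentially the same route as the paper's own proof: apply the constrained de Finetti reduction (Lemma \ref{lemma:dF-constrained}) to $T_{\underline{\cX}}^{\otimes n}P_{\underline{\cA}^n|\underline{\cX}^n}$, then, for each $\emptyset\neq I\subsetneq[\ell]$, bound the fidelity weight by marginalisation, domination of $P_{\cA_I^n|\underline{\cX}^n}$ by an $n$-symmetrised conditional p.d.\ supplied by the SNOS constraint, Lemma \ref{lemma:dF-conditional}, and multiplicativity of the fidelity on tensor products. In fact you are more careful than the paper at the one terse point, the mixture-to-maximum step (which the paper justifies only by ``order-preservation and multiplicativity'', even though fidelity is concave under mixing and can exceed the maximal component fidelity): your fix --- that the de Finetti measure of Lemma \ref{lemma:dF-conditional} may be taken supported on polynomially many conditional type distributions, as in the construction of \cite{AF-R}, combined with subadditivity of the square root --- is exactly the justification needed there, and your bookkeeping correctly reproduces the advertised prefactor $(n+1)^{3|\underline{\cA}|^2|\underline{\cX}|^2+2|\underline{\cA}||\underline{\cX}|}$.
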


\begin{proof}
Since $T_{\underline{\cX}}^{\otimes n} P_{\underline{\cA}^n|\underline{\cX}^n}$ is an $n$-symmetric sub-probability distribution on $(\underline{\cA}\underline{\cX})^n$, we first of all have by Lemma~\ref{lemma:dF-constrained} that
\[ T_{\underline{\cX}}^{\otimes n} P_{\underline{\cA}^n|\underline{\cX}^n} \leq \mathrm{poly}(n) \int_{Q_{\underline{\cA}\underline{\cX}}} F\left(T_{\underline{\cX}}^{\otimes n}P_{\underline{\cA}^n|\underline{\cX}^n},Q_{\underline{\cA}\underline{\cX}}^{\otimes n}\right)^2Q_{\underline{\cA}\underline{\cX}}^{\otimes n}\,\mathrm{d}Q_{\underline{\cA}\underline{\cX}}. \]
Notice next that, for any $\emptyset \neq I\varsubsetneq[\ell]$,
\[
  F\left(T_{\underline{\cX}}^{\otimes n}P_{\underline{\cA}^n|\underline{\cX}^n},Q_{\underline{\cA}\underline{\cX}}^{\otimes n}\right) \leq F\left(T_{\underline{\cX}}^{\otimes n}P_{\cA_I^n|\underline{\cX}^n},Q_{\cA_I\underline{\cX}}^{\otimes n}\right) \leq F\left(T_{\underline{\cX}}^{\otimes n}P'_{\cA_I^n|\cX_I^n},Q_{\cA_I\underline{\cX}}^{\otimes n}\right).
\]
The first inequality is by monotonicity of the fidelity under stochastic
maps (in particular taking marginals). While the second inequality is
because $P\in \SNOS(\underline{\cA}^n|\underline{\cX}^n)$, so that
$P_{\cA_I^n|\underline{\cX}^n}\leq P'_{\cA_I^n|\cX_I^n}$ for some
conditional p.d.~$P'_{\cA_I^n|\cX_I^n}$, and because the fidelity is order-preserving.

What is more, for any $\emptyset \neq I\varsubsetneq[\ell]$, $P'_{\cA_I^n|\cX_I^n}$ can be chosen to be an $n$-symmetric conditional probability distribution. Indeed, if it were not, its $n$-symmetrization would still upper bound $P_{\cA_I^n|\underline{\cX}^n}$ (since the latter is by assumption $n$-symmetric). We then have by Lemma~\ref{lemma:dF-conditional} that
\[
  P'_{\cA_I^n|\cX_I^n} \leq \mathrm{poly}(n) \int_{R_{\cA_I|\cX_I}} R_{\cA_I|\cX_I}^{\otimes n}\,\mathrm{d}R_{\cA_I|\cX_I}, \]
and subsequently, using first, once more, that the fidelity is order-preserving, and second that it is multiplicative on tensor products,
\begin{align*}
F\left(T_{\underline{\cX}}^{\otimes n}P'_{\cA_I^n|\cX_I^n},Q_{\cA_I\underline{\cX}}^{\otimes n}\right) & \leq \mathrm{poly}(n) \underset{R_{\cA_I|\cX_I}}{\max}\, F\left(T_{\underline{\cX}}^{\otimes n} R_{\cA_I|\cX_I}^{\otimes n},Q_{\cA_I\underline{\cX}}^{\otimes n}\right)\\
& = \mathrm{poly}(n) \underset{R_{\cA_I|\cX_I}}{\max}\, F\left(T_{\underline{\cX}} R_{\cA_I|\cX_I},Q_{\cA_I\underline{\cX}}\right)^n.
\end{align*}

Recapitulating, we get
\[
  T_{\underline{\cX}}^{\otimes n} P_{\underline{\cA}^n|\underline{\cX}^n}
    \leq \mathrm{poly}(n) \int_{Q_{\underline{\cA}\underline{\cX}}} \left(\underset{\emptyset \neq I\varsubsetneq[\ell]}{\min}\, \underset{R_{\cA_I|\cX_I}}{\max}\, F\left(T_{\underline{\cX}}R_{\cA_I|\cX_I},Q_{\cA_I\underline{\cX}}\right)\right)^{2n}Q_{\underline{\cA}\underline{\cX}}^{\otimes n}\,\mathrm{d}Q_{\underline{\cA}\underline{\cX}},
\]
as announced.
\end{proof}

\section{Proofs of the main Theorems}
\label{sec:proofs}

In this section we prove Theorem \ref{th:repetitionSNOS}, Corollary~\ref{cor:repetitionSNOS} and Theorem \ref{thm:2-player}.

We need first of all the following extension of Lemma 9.5 in \cite{Hol}:
\begin{lemma}
  \label{lemma:Hol-Lemma-9.5}
  For $\underline{\cZ} = \bigtimes_{\!\!j=1}^m\, \cZ_j$ and
  $\underline{\cB} = \bigtimes_{\!\!j=1}^m\, \cB_j$,
  consider probability distributions $T$ on $\underline{\cZ}$
  and $P$ on $\underline{\cB}\times\underline{\cZ}$ satisfying
  \begin{equation} \label{eq:markov1}
    \frac12 \bigl\| P_{\underline{\cZ}} - T_{\underline{\cZ}} \bigr\|_1 \leq \epsilon_0.
  \end{equation}
  If for each $j\in[m]$ there exists a conditional probability distribution
  $Q(b_j|z_j)$ such that
  \begin{equation} \label{eq:markov2}
   \frac12 \bigl\| P_{\cB_j\underline{\cZ}} - T_{\underline{\cZ}}Q_{\cB_j|\cZ_j} \bigr\|_1 \leq \epsilon_j,
  \end{equation}
  then there exists a conditional probability distribution
  $P'(\underline{b}|\underline{z})$ such that, for each $j\in[m]$, $P'(b_j|\underline{z}) = P'(b_j|z_j)$ for all $b_j,\underline{z}$, and
  \begin{equation} \label{eq:reconstruct}
  \frac12 \bigl\| T_{\underline{\cZ}} P'_{\underline{\cB}|\underline{\cZ}} - P_{\underline{\cB}\underline{\cZ}} \bigr\|_1 \leq \epsilon_0 + \sum_{j=1}^m 2\epsilon_j.
  \end{equation}
\end{lemma}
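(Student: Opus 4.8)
The plan is to build $P'$ by localizing the output coordinates $b_1,\ldots,b_m$ one at a time, using each hypothesis exactly once: \eqref{eq:markov1} to absorb the mismatch between $P_{\underline\cZ}$ and $T_{\underline\cZ}$, and the $j$-th instance of \eqref{eq:markov2} to localize the $j$-th coordinate. First I would dispose of $\epsilon_0$: since the target $P'$ is a genuine conditional distribution, the triangle inequality through $P_{\underline\cZ}P'_{\underline\cB|\underline\cZ}$ peels off a term $\frac12\|T_{\underline\cZ}-P_{\underline\cZ}\|_1\le\epsilon_0$ (the $\underline b$-sums cancel because $P'(\cdot|\underline z)$ normalises). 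It therefore suffices to produce a conditional $P'_{\underline\cB|\underline\cZ}$ with $P'(b_j|\underline z)=P'(b_j|z_j)$ for every $j$ and $\frac12\sum_{\underline z}P(\underline z)\|P'(\cdot|\underline z)-P(\cdot|\underline z)\|_1\le 2\sum_j\epsilon_j$, i.e.\ a local conditional that is $P_{\underline\cZ}$-close to the conditional $P_{\underline\cB|\underline\cZ}$.

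The device that makes the many-coordinate localization go through --- and which is the real content of the extension beyond Holenstein's few-coordinate statement --- is to process the coordinates in order and, at the step that fixes coordinate $j$, to alter only the forward conditional of $(b_j,\ldots,b_m)$ given $(b_1,\ldots,b_{j-1})$ and $\underline z$, leaving the joint law of $(b_1,\ldots,b_{j-1})$ untouched. Because any conditional sums to one over its own variables, this operation leaves the marginal of every earlier coordinate unchanged; in particular the already-achieved exact localizations of $b_1,\ldots,b_{j-1}$ are preserved. Thus the localizations never interfere, and at the end all $m$ marginal constraints hold simultaneously and exactly. This reduces the statement to a single-coordinate task: given an arbitrary ``prior'' on the earlier coordinates, replace the forward conditional so that the $b_j$-marginal becomes an exact function of $z_j$, at controlled cost.

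For that single step I would use that, by the $j$-th hypothesis \eqref{eq:markov2}, the current $b_j$-marginal $P(b_j|\underline z)$ is already $\epsilon_j$-close (in $P_{\underline\cZ}$-average) to the local profile carried by $Q_{\cB_j|\cZ_j}$. One can therefore ``snap'' it to an exactly local target extracted from $Q$ by a minimal reallocation of mass among the values of $b_j$, and bound the resulting perturbation of the joint by the distance between the old and new $b_j$-marginals. A two-hop triangle inequality through $T_{\underline\cZ}Q_{\cB_j|\cZ_j}$ (first the hypothesis \eqref{eq:markov2}, then the snap to the local target) charges this step at most $2\epsilon_j$; summing over $j$ and adding the initial $\epsilon_0$ yields the claimed bound $\epsilon_0+\sum_j 2\epsilon_j$.

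The hard part is this single-coordinate localization, and it is the tension inherent in any such reconstruction: forcing $P'(b_j|\underline z)$ to be an exact function of $z_j$ pulls toward decoupling $b_j$ from the other outputs, whereas staying close to $P$ pulls toward preserving the correlations that $P$ has built between $b_j$ and the rest. These can only be reconciled because \eqref{eq:markov2} guarantees the marginal is \emph{already} almost local, so the required correction is small. The delicate points I expect to fight with are to realise the correction as a genuine (non-negative, normalised) conditional rather than a signed perturbation, and to keep the book-keeping clean enough that each coordinate is charged $2\epsilon_j$ and $\epsilon_0$ is paid only once, instead of letting the $P_{\underline\cZ}$-versus-$T_{\underline\cZ}$ discrepancy leak into every coordinate.
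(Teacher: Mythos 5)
Your overall strategy --- peel off $\epsilon_0$ once, then localize the output coordinates one at a time by ``snapping'' each $b_j$-marginal to a local target at a cost controlled by the marginal distance --- is exactly the paper's route, which constructs the same sequence $P^{(1)},\ldots,P^{(m)}$ by iterating Holenstein's Lemma 9.4 from \cite{Hol} ($m$ times instead of the two or three times in the known cases). However, your stated invariant is too weak for the iteration to close. You only require that the step fixing coordinate $j$ preserve the joint law of the \emph{earlier} coordinates $(b_1,\ldots,b_{j-1})$. That indeed keeps the already-achieved localizations intact, but it says nothing about the coordinates not yet processed: an arbitrary alteration of the forward conditional of $(b_j,\ldots,b_m)$ given $(b_{<j},\underline{z})$ can move the $b_k$-marginals for $k>j$ by as much as the full $\ell_1$-cost of that step. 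Yet at step $k$ you invoke hypothesis \eqref{eq:markov2} for ``the current $b_k$-marginal $P(b_k|\underline{z})$'', silently identifying the current marginal with the original one --- an identification your invariant does not justify. Without it, the snap at step $k$ must also pay for the drift accumulated in steps $1,\ldots,k-1$, and the resulting recursion $c_k \le 2\epsilon_k + \sum_{i<k} c_i$ compounds to a bound of order $2^m \max_j \epsilon_j$ instead of $\sum_j 2\epsilon_j$. This is fatal in the present paper, where the lemma is applied with $m=2^\ell-2$, so the constant $C_\ell$ would become doubly exponential in $\ell$.

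The missing idea is that the single-coordinate step can, and must, be chosen to preserve the joint law of \emph{all} the other coordinates, not merely the earlier block. This is precisely what Holenstein's Lemma 9.4 delivers: a maximal-coupling transport in which each unit of mass moved from a value $b_j$ to a value $b_j'$ carries along the old conditional of $b_{[m]\setminus j}$ given $b_j$, producing a joint whose $b_j$-marginal equals the target exactly, whose marginal on $\cB_{[m]\setminus j}$ is untouched, and whose $\ell_1$-distance to the old joint is at most the distance between the old and new $b_j$-marginals. With this stronger invariant, at every step the current $b_j$-marginal really is $P(b_j|\underline{z})$, the localizations genuinely never interfere, and your accounting goes through. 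One further detail you should pin down: hypothesis \eqref{eq:markov2} gives $\epsilon_j$-closeness in $T_{\underline{\cZ}}$-average, not in $P_{\underline{\cZ}}$-average as you assert; the conversion costs $\frac12\bigl\|P_{\underline{\cZ}}-T_{\underline{\cZ}}\bigr\|_1$, and the way to keep the books clean is to bound this by $\epsilon_j$ (marginalizing \eqref{eq:markov2} itself over $\cB_j$) rather than by $\epsilon_0$ via \eqref{eq:markov1}. That is exactly how each coordinate ends up charged $2\epsilon_j$ while $\epsilon_0$ is paid only once --- the resolution of the leak you flagged but did not resolve.
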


\begin{proof}
This works exactly as the proofs of the case $m=2$, appearing as Lemma 9.5 in \cite{Hol}, or of the case $m=3$, appearing as Lemma 5.4 in \cite{Ros}. Both statements follow from applying either two or three times Lemma 9.4 in \cite{Hol}. Let us state the latter for completeness, and then only sketch how the proofs of the cases $m=2$ or $m=3$ generalize to any $m$.
\begin{quote} Holenstein (\cite{Hol}, Lemma 9.4): Let $\overline{P}_{\mathcal{S}\mathcal{T}}$ and $\overline{Q}_{\mathcal{S}}$ be probability distributions over $\mathcal{S}\times\mathcal{T}$ and $\mathcal{S}$ respectively. There exists a probability distribution $\overline{R}_{\mathcal{S}\mathcal{T}}$ over $\mathcal{S}\times\mathcal{T}$ such that
\[ \bigl\| \overline{R}_{\mathcal{S}\mathcal{T}} - \overline{P}_{\mathcal{S}\mathcal{T}} \bigr\|_1 \leq  \bigl\| \overline{Q}_{\mathcal{S}} - \overline{P}_{\mathcal{S}} \bigr\|_1\ \text{and}\  \bigl\| \overline{R}_{\mathcal{S}} - \overline{Q}_{\mathcal{S}} \bigr\|_1= \bigl\| \overline{R}_{\mathcal{T}} - \overline{P}_{\mathcal{T}} \bigr\|_1=0. \]
\end{quote}
Thanks to this result, we know that we can recursively construct a sequence $P^{(1)},\ldots,P^{(m)}$ of probability distributions on $\underline{\mathcal{B}}\times\underline{\mathcal{Z}}$ such that, setting $P^{(0)}=Q$, for each $j\in[m]$, we have: for any fixed $\underline{z}\in\underline{\mathcal{Z}}$,
\begin{align*}
& \bigl\| P^{(j)}_{\underline{\mathcal{B}}|\underline{\mathcal{Z}}=\underline{z}} - P^{(j-1)}_{\underline{\mathcal{B}}|\underline{\mathcal{Z}}=\underline{z}} \bigr\|_1 \leq \bigl\| P_{\mathcal{B}_j|\underline{\mathcal{Z}}=\underline{z}} - Q_{\mathcal{B}_j|\mathcal{Z}_j=z_j} \bigr\|_1, \\
& \bigl\| P^{(j)}_{\mathcal{B}_j|\underline{\mathcal{Z}}=\underline{z}} - Q_{\mathcal{B}_j|\mathcal{Z}_j=z_j} \bigr\|_1 = 0, \\
& \forall\ k\in[m]\setminus\{j\},\ \bigl\| P^{(j)}_{\mathcal{B}_k|\underline{\mathcal{Z}}=\underline{z}} - Q_{\mathcal{B}_k|\underline{\mathcal{Z}}=\underline{z}} \bigr\|_1 = 0.
\end{align*}
The probability distribution $P^{(m)}$ then satisfies
\[ \bigl\|T_{\underline{\cZ}} P^{(m)}_{\underline{\cB}|\underline{\cZ}} - P_{\underline{\cB}\underline{\cZ}} \bigr\|_1 \leq \bigl\| P_{\underline{\cZ}} - T_{\underline{\cZ}} \bigr\|_1 + \sum_{j=1}^m 2\bigl\| P_{\cB_j\underline{\cZ}} - T_{\underline{\cZ}}Q_{\cB_j|\cZ_j} \bigr\|_1, \]
and can therefore be chosen as the desired $P'$.
\end{proof}

We just mention as a side note that Lemma 9.4 in \cite{Hol} crucially relies on the following fact: the statistical distance between two probability distributions $P_1,P_2$, i.e.
\[ \frac{1}{2}\|P_1-P_2\|_1, \]
can be equivalently characterized as the minimum probability that $X_1$ differs from $X_2$ over pairs of random variables $(X_1,X_2)$ sampled from $P$ having $(P_1,P_2)$ as marginals.

Note that the conditions enforced in Lemma \ref{lemma:Hol-Lemma-9.5}
are not enough to ensure no-signalling of $P'$ for three or more players. They would be sufficient though to guarantee that $P'$ satisfies the relaxed no-signalling constraints considered in \cite{Ros}, namely that any group of $\ell-1$ players together cannot signal to the remaining player. In other words, if a correlation approximately satisfies the Markov chain conditions necessary for being no-signalling, in the form of equations \eqref{eq:markov1} and \eqref{eq:markov2}, then it is approximated, in the sense of equation \eqref{eq:reconstruct}, by a ``weak'' no-signalling correlation, as considered in \cite{Ros}. Nevertheless, we can leverage this result to approximate the given
no-signalling correlation by a sub-no-signalling correlation.

\begin{lemma}
  \label{lemma:Hol-SNOS}
  Let $P$ be a probability distribution on $\underline{\cA}\times\underline{\cX}$
  and $T$ be a probability distribution on $\underline{\cX}$. If the no-signalling conditions
  (\ref{eq:NS}) hold approximately, namely
  \[
    \forall\ I\subsetneq[\ell],\ \exists\ Q(\cdot|x_I)\text{ p.d.'s on }\cA_I\text{ s.t.~}
      \frac12 \bigl\| P_{\cA_I\underline{\cX}}
                      - T_{\underline{\cX}} Q_{\cA_I|\cX_I} \bigr\|_1 \leq \epsilon_I,
  \]
  then there exists a sub-no-signalling correlation $P' \in \SNOS(\underline{\cA}|\underline{\cX})$
  that approximates $P$, in the sense that
  \[
    \frac12 \bigl\| T_{\underline{\cX}}P'_{\underline{\cA}|\underline{\cX}}
                                         - P_{\underline{\cA}\underline{\cX}} \bigr\|_1
                 \leq \epsilon_{\emptyset} + \sum_{\emptyset\neq I\subsetneq[\ell]} 2\epsilon_I.
  \]
  In the two-player case $\ell=2$, $P'$ can be chosen to be
  no-signalling itself, $P'\in \NS(\underline{\cA}|\underline{\cX})$.
\end{lemma}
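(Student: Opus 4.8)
The plan is to construct $P'$ by a purely \emph{subtractive} (monotone) modification of the joint distribution $P_{\underline{\cA}\underline{\cX}}$, exploiting the fact that the defining constraints of $\SNOS$ in \eqref{eq:SNOS} are one-sided (domination by a probability distribution) rather than two-sided (equality, as in \eqref{eq:NS}). This is exactly what makes the multi-player case go through: for $\ell\geq3$ the overlapping no-signalling \emph{equalities} cannot in general be imposed simultaneously (the ``frustration'' of Example \ref{ex}), but upper-bound constraints \emph{can}, because lowering the density to meet one of them never increases any marginal, and hence never destroys a domination already secured for another subset.

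Concretely, write $W:=P_{\underline{\cA}\underline{\cX}}$, a distribution of total mass $1$ with $\underline{\cX}$-marginal $S:=P_{\underline{\cX}}$; note that the hypothesis taken at $I=\emptyset$ reads $\frac12\|S-T_{\underline{\cX}}\|_1\leq\epsilon_\emptyset$. First I would process the nonempty proper subsets one at a time, in any order: at the step for $I$, I subtract from the current density exactly the excess of its $\cA_I\underline{\cX}$-marginal over the target $(a_I,\underline{x})\mapsto T(\underline{x})Q(a_I|x_I)$, spreading each subtraction over the hidden coordinates $\cA_{I^c}$. After all these, one further subtraction enforces that the $\underline{\cX}$-marginal does not exceed $T_{\underline{\cX}}$. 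Call the result $W'\leq W$ pointwise. By construction $W'(a_I,\underline{x})\leq T(\underline{x})Q(a_I|x_I)$ for every nonempty proper $I$ and $W'_{\underline{\cX}}\leq T_{\underline{\cX}}$, and these survive all later subtractions by monotonicity. I then set $P'(\underline{a}|\underline{x}):=W'(\underline{a},\underline{x})/T(\underline{x})$ (and $0$ where $T(\underline{x})=0$). Dividing the subset bound by $T(\underline{x})$ gives $P'(a_I|\underline{x})\leq Q(a_I|x_I)$, while $W'_{\underline{\cX}}\leq T_{\underline{\cX}}$ gives $\sum_{\underline{a}}P'(\underline{a}|\underline{x})\leq1$; hence $P'\in\SNOS(\underline{\cA}|\underline{\cX})$.

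For the error bound, observe that $T_{\underline{\cX}}P'_{\underline{\cA}|\underline{\cX}}=W'$ and $P_{\underline{\cA}\underline{\cX}}=W$, so $\frac12\|T_{\underline{\cX}}P'_{\underline{\cA}|\underline{\cX}}-P_{\underline{\cA}\underline{\cX}}\|_1=\frac12\|W-W'\|_1$, i.e.\ half the total mass removed. The mass removed at the step for $I$ is $\sum_{\underline{x},a_I}[W^{\mathrm{prev}}(a_I,\underline{x})-T(\underline{x})Q(a_I|x_I)]_+$; since $W^{\mathrm{prev}}\leq W$ this is at most $\sum_{\underline{x},a_I}[P_{\cA_I\underline{\cX}}(a_I,\underline{x})-T(\underline{x})Q(a_I|x_I)]_+\leq\|P_{\cA_I\underline{\cX}}-T_{\underline{\cX}}Q_{\cA_I|\cX_I}\|_1\leq2\epsilon_I$, using precisely the approximate no-signalling hypothesis for $I$. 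Likewise the normalization step removes at most $\sum_{\underline{x}}[S(\underline{x})-T(\underline{x})]_+=\frac12\|S-T_{\underline{\cX}}\|_1\leq\epsilon_\emptyset$. Summing these per-step bounds (each valid against the original $W$) shows the total mass removed is at most $\epsilon_\emptyset+\sum_{\emptyset\neq I\subsetneq[\ell]}2\epsilon_I$, which already yields the claimed estimate, with room to spare.

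Finally, the two-player strengthening to $\NS$ is where I would bring in Lemma \ref{lemma:Hol-Lemma-9.5}: for $\ell=2$ the only nonempty proper subsets are the singletons $\{1\},\{2\}$, and applying Lemma \ref{lemma:Hol-Lemma-9.5} with $m=2$, $\cZ_j=\cX_j$, $\cB_j=\cA_j$, $\epsilon_0=\epsilon_\emptyset$, $\epsilon_j=\epsilon_{\{j\}}$ produces a genuine conditional probability distribution whose single-party marginals depend only on the respective input, which for two parties is exactly no-signalling, at distance $\epsilon_\emptyset+2\epsilon_{\{1\}}+2\epsilon_{\{2\}}$ from $P$; alternatively one bumps the $\SNOS$ correlation $P'$ above up to a no-signalling one via Lemma \ref{lemma:2-player-SNOS}, checking that the bump adds no more mass than was removed. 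I expect the hard part to be conceptual rather than computational: one must recognize that the individually innocuous no-signalling equalities are jointly unsatisfiable for $\ell\geq3$, and that replacing them by the one-sided $\SNOS$ inequalities is exactly what restores a monotone, order-respecting reconstruction; once that is in place the $L_1$ accounting above is routine.
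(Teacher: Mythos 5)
Your proof is correct, but it follows a genuinely different route from the paper's for the main ($\SNOS$) claim. The paper obtains $P'$ by invoking Lemma \ref{lemma:Hol-Lemma-9.5} with $m=2^\ell-2$ blocks (one per nonempty proper $I\subsetneq[\ell]$) on blown-up alphabets $\underline{\cB},\underline{\cZ}$, with $P$ and $T$ embedded along the diagonal map $\Delta$; the resulting conditional distribution $\widehat{P}'$ lives on the whole product space, and it is the truncation to the image of $\Delta$ that produces the sub-normalized correlation, whose marginal identities turn into the $\SNOS$ dominations. You instead build $P'$ directly by iterated truncation of the joint density: since the $\SNOS$ constraints are one-sided upper bounds and marginals are monotone under pointwise decrease, each subtraction step permanently secures one domination without disturbing the others --- exactly the feature that fails for the two-sided $\NS$ equalities when $\ell\geq 3$. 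Your construction is more elementary (it bypasses Lemma \ref{lemma:Hol-Lemma-9.5} entirely for the $\SNOS$ part) and in fact gives a better constant: since $P_{\cA_I\underline{\cX}}$ and $T_{\underline{\cX}}Q_{\cA_I|\cX_I}$ are both normalized, $\sum[P_{\cA_I\underline{\cX}}-T_{\underline{\cX}}Q_{\cA_I|\cX_I}]_+=\frac12\|P_{\cA_I\underline{\cX}}-T_{\underline{\cX}}Q_{\cA_I|\cX_I}\|_1\leq\epsilon_I$, so your total error is at most $\frac12\bigl(\epsilon_\emptyset+\sum_{\emptyset\neq I\subsetneq[\ell]}\epsilon_I\bigr)$, a factor of (more than) two below the stated bound; the paper's constant, by contrast, is inherited from the iterated application of Holenstein's Lemma 9.4. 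What the paper's route buys is uniformity: the two-player $\NS$ refinement is then literally the $m=2$ case of the same lemma (Holenstein's original Lemma 9.5), whereas in your argument that case requires a separate appeal --- your first option for it is precisely the paper's proof, and your second (bumping the $\SNOS$ correlation up via Lemma \ref{lemma:2-player-SNOS} and noting the added mass equals the removed mass) also lands within the stated bound. Both of your closing routes for $\ell=2$ are sound; only the phrase ``checking that the bump adds no more mass than was removed'' deserves the one-line computation that the added $T$-weighted mass is exactly the removed mass, so the triangle inequality gives $\frac12\|T_{\underline{\cX}}P''_{\underline{\cA}|\underline{\cX}}-P_{\underline{\cA}\underline{\cX}}\|_1\leq\epsilon_\emptyset+2\epsilon_{\{1\}}+2\epsilon_{\{2\}}$.
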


\begin{proof}
We will apply Lemma~\ref{lemma:Hol-Lemma-9.5}, with $m=2^\ell-2$,
the index $j$ identifying a non-empty and non-full set
$\emptyset \neq I \subsetneq [\ell]$ (for instance via the
expansion of $j$ into $\ell$ binary digits). The local input and
output alphabets are
\[
  \cZ_j = \bigtimes_{i\in I} \cX_i,\quad \cB_j = \bigtimes_{i\in I} \cA_i,
\]
and the distribution we apply it to is
\[
  \widehat{P}(\underline{b}\underline{z})
   = \begin{cases}
       P(\underline{a}\underline{x}) & \text{ if } \forall j,\ b_j = (a_i:i\in I),\ z_j = (x_i:i\in I), \\
       0                             & \text{ otherwise}.
     \end{cases}
\]
Likewise, the prior distribution on $\underline{\cZ}$ is given by
\[
  \widehat{T}(\underline{z})
   = \begin{cases}
       T(\underline{x}) & \text{ if } \forall j,\ z_j = (x_i:i\in I), \\
       0                & \text{ otherwise},
     \end{cases}
\]
and we use the conditional distributions $Q(b_j|z_j) = Q(a_I|x_I)$.

Now, the prerequisites of Lemma~\ref{lemma:Hol-Lemma-9.5} are given,
with $\epsilon_j=\epsilon_I$, and thus we get a conditional probability
distribution $\widehat{P}'$ with $\widehat{P}'(b_j|\underline{z}) = \widehat{P}'(b_j|z_j)$
for all $j$, and
\[
  \frac12 \bigl\| \widehat{T}_{\underline{\cZ}}\widehat{P}'_{\underline{\cB}|\underline{\cZ}}
                                       - \widehat{P}_{\underline{\cB}\underline{\cZ}} \bigr\|_1
                                               \leq \epsilon_0 + \sum_{j=1}^n 2\epsilon_j =: \epsilon.
\]
We would like to conclude here by ``pulling back'' this conditional
distribution to a correlation on
$\underline{\cA}\times\underline{\cX}$, which we would wish to be no-signalling. This almost works, except that $P'$ has support
outside the image of the diagonal embedding
\begin{align*}
  \Delta : \underline{\cA} &\longrightarrow \underline{\cB} \\
             \underline{a} &\longmapsto     \underline{b} \text{ s.t.~} \forall j,\ b_j = (a_i:i\in I),
\end{align*}
and likewise for $\Delta:\underline{\cX} \longrightarrow \underline{\cZ}$.

To resolve this issue, we simply remove this part of the distribution,
and define the desired sub-normalized conditional densities by letting
\[
  P'(\underline{a}|\underline{x})
       := \widehat{P}'\bigl(\Delta(\underline{a})|\Delta(\underline{x})\bigr).
\]
From this we see directly that
\[
  \frac12 \bigl\| T_{\underline{\cX}} P'_{\underline{\cA}|\underline{\cX}}
                                       - P_{\underline{\cA}\underline{\cX}} \bigr\|_1
                                                                            \leq \epsilon,
\]
because $\widehat{P}(\underline{b},\underline{z}) = P(\underline{a},\underline{x})$
for $\underline{b} = \Delta(\underline{a})$ and $\underline{z} = \Delta(\underline{x})$,
and it is $0$ outside the image of $\Delta$.

It remains to check that $P'$ is sub-no-signalling. Let
$\emptyset \neq I \subsetneq [\ell]$ be a subset with corresponding index
$1\leq j \leq 2^\ell-2$. Let also $\underline{x}\in\underline{\cX}$,
$a_I\in \cA_I$ be tuples, and set $\underline{z} = \Delta(\underline{x})$,
$\underline{b} = \Delta(\underline{a})$ (so that $z_j=x_I\in \cX_I = \cZ_j$, $b_j = a_I \in \cA_I = \cB_j$).
Then,
\begin{align*}
P'(a_I|\underline{x}) = & \sum_{a_{I^c}\in\cA_{I^c}} P'(\underline{a}|\underline{x}) \\
= & \sum_{a_{I^c}\in\cA_{I^c}} \widehat{P}'\bigl(\Delta(\underline{a})|\Delta(\underline{x})\bigr) \\
\leq & \sum_{b_k \in \cB_k,\ k\neq j} \widehat{P}'\bigl(\underline{b}|\underline{z}\bigr)\\
= & \,\widehat{P}'(b_j|\underline{z}\bigr) \\
= & \,\widehat{P}'(b_j|z_j) =:   Q'(a_I|x_I).
\end{align*}
Here, we have used the definition of the marginal and of $P'$.
The inequality in the third line is because we enlarge the domain
of the summation, and the equality in the last line is by the marginal property
of $\widehat{P}'$.

The last claim, regarding $\ell=2$ players, is the original
Lemma 9.5 in \cite{Hol}.
\end{proof}

We are now ready to prove our main theorem, namely the parallel
repetition and concentration results for the sub-no-signalling value of multi-player games.

\begin{proof}[Proof of Theorem~\ref{th:repetitionSNOS}]
Let $P_{\underline{\cA}^n|\underline{\cX}^n}$ be a sub-no-signalling correlation which is optimal to win the game $G^n$. The distribution $T_{\underline{\cX}}^{\otimes n}$ and the predicate $V_{\underline{\cA}\underline{\cX}}^{\otimes n}$ of $G^n$ being $n$-symmetric, we can assume without loss of generality that $P_{\underline{\cA}^n|\underline{\cX}^n}$ is also $n$-symmetric. Indeed, since for any permutation $\pi$ of $n$ elements, $T\circ\pi=T$ and $V\circ\pi=V$, playing $G^n$ with $P$ or with $P\circ\pi$ yields the same winning probability. And therefore, if $P$ is an optimal strategy then so is its symmetrization over all permutations of $n$ elements. Hence, by Lemma \ref{lemma:deFinetti},
\[
  T_{\underline{\cX}}^{\otimes n} P_{\underline{\cA}^n|\underline{\cX}^n}
      \leq \mathrm{poly}(n) \int_{Q_{\underline{\cA}\underline{\cX}}} \widetilde{F}\left(Q_{\underline{\cA}\underline{\cX}}\right)^{2n}Q_{\underline{\cA}\underline{\cX}}^{\otimes n}\,\mathrm{d}Q_{\underline{\cA}\underline{\cX}}.
\]
Now, fix $0<\epsilon<1$ and define
\[
  \mathcal{P}_{\epsilon}
    :=\left\{ Q_{\underline{\cA}\underline{\cX}}\ :\ \max_{\emptyset \neq I\varsubsetneq[\ell]} \min_{R_{\cA_I|\cX_I}}\, \frac{1}{2}\|T_{\underline{\cX}}R_{\cA_I|\cX_I}-Q_{\cA_I\underline{\cX}}\|_1 \leq \epsilon \right\}.
\]
Observe that, by well-known relations between fidelity and trace-distance (see e.g.~\cite{F-vdG}), if $Q_{\underline{\cA}\underline{\cX}}\notin\mathcal{P}_{\epsilon}$,
then automatically $\widetilde{F}\left(Q_{\underline{\cA}\underline{\cX}}\right)^2\leq 1-\epsilon^2$.
Hence,
\[
  T_{\underline{\cX}}^{\otimes n} P_{\underline{\cA}^n|\underline{\cX}^n}
     \leq \mathrm{poly}(n)\left( \int_{Q_{\underline{\cA}\underline{\cX}}\in\mathcal{P}_{\epsilon}} Q_{\underline{\cA}\underline{\cX}}^{\otimes n}\,\mathrm{d}Q_{\underline{\cA}\underline{\cX}} + (1-\epsilon^2)^{n}\int_{Q_{\underline{\cA}\underline{\cX}}\notin\mathcal{P}_{\epsilon}} Q_{\underline{\cA}\underline{\cX}}^{\otimes n}\,\mathrm{d}Q_{\underline{\cA}\underline{\cX}} \right).
\]
On the other hand, if $Q_{\underline{\cA}\underline{\cX}}\in\mathcal{P}_{\epsilon}$,
then by definition
\[
  \forall\ \emptyset\neq I\varsubsetneq[\ell],\ \exists\ R_{\cA_I|\cX_I}:\
  \frac{1}{2}\|T_{\underline{\cX}}R_{\cA_I|\cX_I}-Q_{\cA_I\underline{\cX}}\|_1 \leq \epsilon.
\]
By Lemma \ref{lemma:Hol-SNOS}, the latter condition implies that there exists
a sub-no-signalling correlation $R'_{\underline{\cA}|\underline{\cX}}$ such that
\[ \frac{1}{2}\|T_{\underline{\cX}}R'_{\underline{\cA}|\underline{\cX}}-Q_{\underline{\cA}\underline{\cX}}\|_1 \leq C_{\ell}\epsilon,\ \text{where}\ C_{\ell}=1+2(2^{\ell}-2)=2^{\ell+1}-3. \]
Yet, the winning probability when playing $G$ with a strategy $R'_{\underline{\cA}|\underline{\cX}}\in \SNOS(\underline{\cA}|\underline{\cX})$ is, by assumption on $G$, at most $1-\delta$. So the average of the predicate of $G$ over $Q_{\underline{\cA}\underline{\cX}}\in\mathcal{P}_{\epsilon}$ is at most $1-\delta+2C_{\ell}\epsilon$.
Putting everything together, we eventually get that the winning probability when playing $G^n$ with strategy $P_{\underline{\cA}^n|\underline{\cX}^n}$ is upper bounded as
\begin{equation}
  \label{eq:split}
  \PP(\text{win}) \leq \mathrm{poly}(n)\left( (1-\delta+2C_{\ell}\epsilon)^n + (1-\epsilon^2)^n \right).
\end{equation}
Choosing in equation~\eqref{eq:split}
\[ \epsilon=C_{\ell}\left(\left(1+\frac{\delta}{C_{\ell}^2}\right)^{1/2}-1\right)\geq \frac{99\delta}{200C_{\ell}},\ \text{so that}\ \epsilon^2\geq\frac{\delta^2}{5C_{\ell}^2}, \]
and recalling that $P_{\underline{\cA}^n|\underline{\cX}^n}$ is, by hypothesis, an optimal sub-no-signalling strategy, we obtain
\begin{equation}
  \label{eq:poly-exp}
  \omega_{\SNOS}(G^n) \leq \mathrm{poly}(n) \left(1-\frac{\delta^2}{5C_{\ell}^2}\right)^n.
\end{equation}

In order to conclude, we have to remove the polynomial pre-factor. So assume that there exists a constant
$C>0$ such that for some $N\in\NN$, $\omega_{\SNOS}(G^N)\geq C\left(1-\delta^2/5C_{\ell}^2\right)^N$.
Then, for any $n\in\NN$, we would have
\[ \omega_{\SNOS}(G^{Nn}) \geq \left(\omega_{\SNOS}(G^{N})\right)^{n}
                          \geq C^{n}\left(1-\frac{\delta^2}{5C_{\ell}^2}\right)^{Nn}. \]
On the other hand, however, we still have by equation~\eqref{eq:poly-exp}
\[ \omega_{\SNOS}(G^{Nn}) \leq \mathrm{poly}(Nn) \left(1-\frac{\delta^2}{5C_{\ell}^2}\right)^{Nn}. \]
Letting $n$ grow, we see that the only option to make these two conditions compatible is to have $C\leq 1$, which is precisely what we wanted to show.

Following the exact same lines as above, we also get the
concentration bound.
Indeed, for any $t\geq (1-\delta+\alpha)n$, we now have in place of equation~\eqref{eq:split} that, for any $0<\epsilon<1$,
\begin{equation} \label{eq:split-t}
  \omega_{\SNOS}(G^{t/n})
    \leq \mathrm{poly}(n)\left( \exp\left[-2n(\alpha-2C_{\ell}\epsilon)^2\right]
                                          + \exp\left[-n\epsilon^2\right] \right).
\end{equation}
The first term in the r.h.s.~of equation \eqref{eq:split-t} is a consequence of Hoeffding's inequality: if $\underline{A},\underline{X}$ are distributed according to $Q_{\underline{\cA}\underline{\cX}}\in\mathcal{P}_{\epsilon}$, then the value of the game predicate is on average at most $1-\delta+2C_{\ell}\epsilon$, so for $n$ independent such $\underline{A},\underline{X}$, the probability that the sum of the $n$ values of the game predicate is above $(1-\delta+\alpha)n$ is at most $\exp[-2n(\alpha-2C_{\ell}\epsilon)^2]$. The second term in the r.h.s.~of equation \eqref{eq:split-t} is obtained by simply using that $e^{-x}\geq 1-x$ for any $x>0$.

The announced upper bound follows from choosing in equation~\eqref{eq:split-t}
\[ \epsilon = \frac{(4C_{\ell}-\sqrt{2})\alpha}{8C_{\ell}^2-1} \geq \frac{5(20-\sqrt{2})\alpha}{199C_{\ell}},\ \text{so that}\ \epsilon^2\geq\frac{\alpha^2}{5C_{\ell}^2}, \]
and removing the polynomial pre-factor by the same trick as before.
\end{proof}

\begin{proof}[Proof of Corollary~\ref{cor:repetitionSNOS}]
By Lemma \ref{lemma:SNOS-NS-multiplayers}, we know that if $G$ is
an $\ell$-player game with full support satisfying
$\omega_{\NS}(G)\leq 1-\delta$,
then $\omega_{\SNOS}(G)\leq 1-\delta/(\Gamma+1)$.
And thus by Theorem \ref{th:repetitionSNOS},
\[
  \omega_{\NS}(G^n) \leq \omega_{\SNOS}(G^n) \leq \left(1-\frac{\delta^2}{5C_{\ell}^2(\Gamma+1)^2}\right)^n.
\]
The concentration bound for $\omega_{\NS}(G^{t/n})$ follows analogously.
\end{proof}

\begin{proof}[Proof of Theorem~\ref{thm:2-player}]
We follow the exact same reasoning as in the proof of Theorem \ref{th:repetitionSNOS}, and keep the same notation. In the case $\ell=2$, we have by Lemma \ref{lemma:Hol-SNOS} that, for any $0<\epsilon<1$,
\[ Q_{\underline{\cA}\underline{\cX}}\in\mathcal{P}_{\epsilon}\ \Rightarrow\ \exists\ R'_{\underline{\cA}\underline{\cX}}\in\NS(\underline{\cA}|\underline{\cX}):\  \frac{1}{2}\|T_{\underline{\cX}}R'_{\underline{\cA}|\underline{\cX}}-Q_{\underline{\cA}\underline{\cX}}\|_1 \leq 5\epsilon. \]
Yet, if the winning probability when playing $G$ with a strategy $R'_{\underline{\cA}|\underline{\cX}}\in \NS(\underline{\cA}|\underline{\cX})$ is, by assumption on $G$, at most $1-\delta$, then the average of the predicate of $G$ over $Q_{\underline{\cA}\underline{\cX}}\in\mathcal{P}_{\epsilon}$ is at most $1-\delta+5\epsilon$. This is because we are here dealing with normalised probability distributions. Hence, for any $0<\epsilon<1$,
\begin{align*}
& \omega_{\NS}(G^n) \leq \mathrm{poly}(n)\left( (1-\delta+5\epsilon)^n + (1-\epsilon^2)^n \right),\\
& \omega_{\SNOS}(G^{t/n}) \leq \mathrm{poly}(n)\left( \exp\left[-2n(\alpha-5\epsilon)^2\right] + \exp\left[-n\epsilon^2\right] \right).
\end{align*}
We can now choose $\epsilon=(\sqrt{29}-5)\delta/2$ in the parallel repetition
estimate and $\epsilon=(10-\sqrt{2})\alpha/49$ in the concentration bound one,
and argue as in the proof of Theorem \ref{th:repetitionSNOS} to remove the
polynomial pre-factor, which yields the two advertised results.
\end{proof}

\section{Discussion}
\label{sec:end}
%\textcolor{red}{We solved everything. Yet, miraculously, open questions remain.}

Our main contribution in the present paper is a concentration result for the
sub-no-signalling value of multi-player games under parallel repetition. In
fact, we believe that our work is the first to recognize the intrinsic interest
of the class of sub-no-signalling correlations, which appears naturally as a
relaxation of the no-signalling ones. In particular, the fact that sub-no-signalling correlations have total probability
less than or equal to $1$ can be interpreted as the possibility
of ``abstaining'' from giving an answer in $\underline{\cA}$, with
a certain probability depending on the input in $\underline{\cX}$.
However, each marginal $P_{\cA_I|\underline{\cX}}$ has to be
consistent ``locally'' with the no-signalling behaviour, in that
it has to be dominated by a correlation $Q_{\cA_I|\cX_I}$ that
depends only on the $I$ positions of the input $\underline{x}$.
This means that each group $I$ of players is able to interpret their
observed statistics as being ``really'' governed by a local
marginal $Q_{\cA_I|\cX_I}$, only that sometimes the device
generating the correlation defaults and does not give an answer.
The probability of abstention depends on the entire input
$\underline{x}$, and thus would be signalling, if observed.
Indeed, the anti-correlation game discussed in Example \ref{ex} shows that a sub-no-signalling correlation cannot always be embedded
in a no-signalling one, except in the case of two players
(cf. Lemma \ref{lemma:2-player-SNOS}). Abstention thus gives more power in general, but it
comes with a price as well, since abstaining does not mean winning
the game. In this sense, it should not be confused with plain
post-selection (that is, conditioning) on the non-abstaining
event, which is well-known to allow the violation of Bell
inequalities by otherwise local hidden variables, via the
so-called ``detection loophole'' \cite{G-M,Ebe}.

Specifically, if an $\ell$-player game $G$ has $\SNOS$ value $1-\delta$, then the
probability for $\SNOS$ players to win a fraction at least $1-\delta+\alpha$ of
$n$ instances of $G$ played in parallel is at most $\exp(-nC_{\ell}\alpha^2)$,
where $C_{\ell}>0$ is a constant which only depends on the number $\ell$ of
players. This, a universal multi-player parallel repetition and concentration
bound, is in contrast to the results on \cite{B-F-S} and \cite{AF-R-V},
which are restricted to full-support game distributions and
with constants that seem to depend heavily on the game. We
think of these findings as evidence that sub-no-signalling
correlations are natural, due to their well-behaved parallel
repetition properties. As hinted at in \cite{B-F-S}, such a result,
valid for games involving strictly more than $2$ players and where not all queries are asked \cite{Harry}, might potentially find applications in position-based cryptography \cite{B-C-F-G-G-O-S,F-K-T-W}. It would also be interesting
to investigate whether the recent work of \cite{K-R-R}, showing multi-prover interactive proofs for EXP (exponential
time languages) that are robust against no-signalling provers, remains valid for sub-no-signalling provers, and whether our
result can be generalized to amplify the soundness gap of their
scheme. The latter is not self-evident, as they require a polynomial
number of provers, but our bounds carry a penalty exponential
in the number of players.
%It would also be tempting to hope that it could be fruitfully matched with the results of \cite{K-R-R}, which show how no-signalling provers can be used to efficiently delegate computation. However, in order to hold, the latter result requires \textit{many} provers, and the exponential decay rate that we get scales \textit{very badly} with the number players, so it is not clear that anything useful could come out of combining both.

In the case $\ell=2$, our concentration statement is actually equivalent to the analogous one for the no-signalling value of $G$, thus with a universal constant $c=C_2$ in the exponential bound. And we know we cannot hope for a
better dependence in $\alpha$ than the obtained one, even in the special case
$\alpha=\delta$, as proved in \cite{K-R}.
%Indeed, e.g.~in the Appendix of \cite{Hol}, an example of $2$-player game $G$, with binary inputs and outputs, satisfying $\omega_{\NS}(G)<1$ but $\omega_{\NS}(G^2)=\omega_{\NS}(G)$ is provided. The latter can be generalized to get such example with a value of $\omega_{\NS}(G)$ as close to $1$ as wanted. And this proves that strong parallel repetition in general does not hold for no-signalling players: there is no universal constant $c'>0$ such that, for any $2$-player game $G$ with binary inputs and outputs, $\omega_{\NS}(G)\leq 1-\delta\ \Rightarrow\ \omega_{\NS}(G^n)\leq(1-c'\delta)^n$.
In the case $\ell>2$, our result implies a concentration
bound for the no-signalling value of $G$, but only if its input distribution has
full support. Besides, the constant in the exponential bound is this time highly
game-dependent (dependence on the sizes of the input and output alphabets, and
on the smallest weight occurring in the input distribution). This is fully
comparable to previous work in this direction due to Buhrman, Fehr and
Schaffner~\cite{B-F-S}, and Arnon-Friedman, Renner and Vidick~\cite{AF-R-V}.

Hence, the most immediate open problem at that point is regarding games with
non-full support in the case of three or more players (e.g.~the anti-correlation game): does a parallel repetition
result hold for the no-signalling value of such multi-player games?
Answering this question probably requires to understand first whether in
Corollary~\ref{cor:repetitionSNOS}, the presence of the game parameter $\Gamma$
is really necessary or is just an artifact of the proof technique. In other words,
does the rate at which the no-signalling value of a game decays under parallel
repetition truly depends on the game distribution?

Another issue that would be worth investigating is whether constrained de Finetti
reductions could also be used to establish parallel repetition results for the
classical or quantum value of multi-player games. Formally, the sets of classical
correlations $\text{C}(\underline{\cA}|\underline{\cX})$ and quantum correlations
$\text{Q}(\underline{\cA}|\underline{\cX})$ are defined as follows:
\[
  P \in \text{C}(\underline{\cA}|\underline{\cX})
      :\Leftrightarrow
      \forall\ \underline{x},\underline{a},\
        P(\underline{a}|\underline{x})
           =\sum_{m\in\mathcal{M}}Q(m)P_1(a_1|x_1\,m)\cdots P_{\ell}(a_{\ell}|x_{\ell}\,m),
\]
for some p.d.~$Q$ on some alphabet $\mathcal{M}$ and some p.d.'s $P_i(\cdot|x_i\,m)$ on $\cA_i$.
\[
  P \in \text{Q}(\underline{\cA}|\underline{\cX})
      :\Leftrightarrow
      \forall\ \underline{x},\underline{a},\
         P(\underline{a}|\underline{x})
            =\bra{\psi}M(x_1)_{a_1}\otimes\cdots\otimes M(x_{\ell})_{a_{\ell}}\ket{\psi},
\]
for some pure state $\ketbra{\psi}{\psi}$ on $\mathrm{H}_1\otimes\cdots\otimes\mathrm{H}_{\ell}$ and some POVMs $M(x_i)$ on $\mathrm{H}_i$.\\
And the classical, resp.~quantum, value of an $\ell$-player game $G$ with distribution $T$ and predicate $V$, denoted $\omega_{\text{C}}(G)$, resp.~$\omega_{\text{Q}}(G)$, is then naturally defined as the maximum, resp.~supremum, of the winning probability
\[ \label{eq:NS-SNOS-value}
  \PP\left(\text{win}\right) = \sum_{\underline{a},\underline{x}} T(\underline{x}) V(\underline{a},\underline{x}) P(\underline{a}|\underline{x}) \]
over all $P\in\text{C}(\underline{\cA}|\underline{\cX})$, resp.~$P\in\text{Q}(\underline{\cA}|\underline{\cX})$.

In the classical case, the first parallel repetition result for two-player games
was established by Raz~\cite{Raz}, and later improved by Holenstein~\cite{Hol},
while Rao~\cite{Rao} gave a concentration bound. However, the proof techniques are
arguably not as straightforward as via de Finetti reductions, and do not
generalise directly to any number $\ell$ of players. In the quantum case, even
less is known. The best parallel repetition result up to now is the one established by
Chailloux and Scarpa~\cite{C-S} (subsequently improved by Chung, Wu and Yuen~\cite{C-W-Y}),
which applies to two-player ($\ell$-player) free games, and from there to games
with full support.
That is why being able to export
ideas from the de Finetti approach to these two cases would be of great interest.
Roughly speaking, the problem we are facing is the following: Given an $n$-symmetric
correlation $P_{\underline{\cA}^n|\underline{\cX}^n}$, we can always write the
first step in the proof of Lemma~\ref{lemma:deFinetti}, i.e.~\begin{equation} \label{eq:dFstrategy}
  T_{\underline{\cX}}^{\otimes n} P_{\underline{\cA}^n|\underline{\cX}^n}
       \leq \mathrm{poly}(n)
            \int_{Q_{\underline{\cA}\underline{\cX}}}
                    F\left(T_{\underline{\cX}}^{\otimes n}P_{\underline{\cA}^n|\underline{\cX}^n},
                           Q_{\underline{\cA}\underline{\cX}}^{\otimes n}\right)^2
                    Q_{\underline{\cA}\underline{\cX}}^{\otimes n}\,\mathrm{d}Q_{\underline{\cA}\underline{\cX}}.
\end{equation}
Now, we would like to argue that if $P_{\underline{\cA}^n|\underline{\cX}^n}$ is a
classical, resp.~quantum, correlation, then the p.d.'s $Q_{\underline{\cA}\underline{\cX}}$
for which the fidelity weight in the r.h.s.~of equation~\eqref{eq:dFstrategy} is not exponentially small are
necessarily close to being of the form $T_{\underline{\cX}}R_{\underline{\cA}|\underline{\cX}}$
for some classical, resp.~quantum, correlation $R_{\underline{\cA}|\underline{\cX}}$.
This was precisely our proof philosophy in the no-signalling case. However, the
fact that the classical and quantum conditions are not properties that one can
read off on the marginals, contrary to the no-signalling one, seems to be a first
obstacle to surmount.

One related legitimate question would be the following: is it possible to make an even stronger statement than the one that, as explained above, we either are looking for (in the classical and quantum cases) or already have (in the no-signalling case)? Namely, could we upper bound $T_{\underline{\cX}}^{\otimes n}P_{\underline{\cA}^n|\underline{\cX}^n}$ by a de Finetti distribution analogous to that in the r.h.s.~of equation~\eqref{eq:dFstrategy}, but with weight strictly $0$ on p.d.'s $Q_{\underline{\cA}\underline{\cX}}$
which are not of the form $T_{\underline{\cX}}R_{\underline{\cA}|\underline{\cX}}$, for $R_{\underline{\cA}|\underline{\cX}}$ belonging to the same class as $P_{\underline{\cA}^n|\underline{\cX}^n}$? The answer to this question is no. Indeed, such improved de Finetti reduction would imply a strong parallel repetition result, which we know does not hold (see \cite{AF-R-V} for a similar discussion). So the best we can hope for is really to show that the fidelity weight in our upper bounding de Finetti distribution is exponentially small on the  p.d.'s which are too far from being of the desired form.

Finally, let us briefly comment on the main spirit difference between the present work and the one by Arnon-Friedman \emph{et al.}~\cite{AF-R-V}. Our approach consists in using a more ``flexible'' de Finetti reduction, in which the information on the correlation $P_{\underline{\cA}^n|\underline{\cX}^n}$ and the p.d.~$T_{\underline{\cX}}^{\otimes n}$ of interest are kept in the upper bounding de Finetti distribution, through the fidelity weight $F(T_{\underline{\cX}}^{\otimes n}P_{\underline{\cA}^n|\underline{\cX}^n},Q_{\underline{\cA}\underline{\cX}}^{\otimes n})^2$. Whereas in \cite{AF-R-V}, any initial correlation is first upper bounded by the same universal de Finetti correlation, on which a test (specifically tailored to the considered game distribution) is performed in a second step, that has the property of letting pass, resp.~rejecting, with high probability the strategies which are no-signalling, resp.~too signalling. So it seems in the end that both approaches are quite closely related: in our case, the ``signalling test'' which is applied to a given p.d.~$Q_{\underline{\cA}\underline{\cX}}$ is nothing else than the maximal fidelity of $Q_{\underline{\cA}\underline{\cX}}$ to the set of p.d.'s of the form $T_{\underline{\cX}}R_{\underline{\cA}|\underline{\cX}}$, with $R_{\underline{\cA}|\underline{\cX}}$ no-signalling, being above or below a certain threshold value. Also, it would be interesting (and potentially fruitful) to investigate whether one could combine in some way the techniques yielding Lemmas \ref{lemma:dF-conditional} and \ref{lemma:dF-constrained}, to get a de Finetti reduction result that would have the advantages of both: namely, that is designed for conditional p.d.'s while at the same carrying the relevant information on the conditional p.d.~it is applied to.

\section*{Acknowledgements}
We thank Rotem Arnon-Friedman, Harry Buhrman and Renato Renner for
insightful discussions on parallel repetition and de Finetti theorems.

This research was supported by the European Research Council
(Advanced Grant IRQUAT, ERC-2010-AdG-267386), the European Commission
(STREP RAQUEL, FP7-ICT-2013-C-323970), the Spanish MINECO
(projects FIS2008-01236 and FIS2013-40627-P), with the support of FEDER
funds, the Generalitat de Catalunya (CIRIT project 2014-SGR-966),
and the French CNRS (ANR projects OSQPI 11-BS01-0008 and Stoq 14-CE25-0033).


\begin{thebibliography}{8}
  \bibitem{AF-R} R.~Arnon-Friedman, R.~Renner,
    ``de Finetti reductions for correlations'',
    \emph{J. Math. Phys.} {\bf 56}:052203 (2015);
    arXiv[quant-ph]:1308.0312.

  \bibitem{AF-R-V} R.~Arnon-Friedman, R.~Renner, T.~Vidick,
    ``Non-signalling parallel repetition using de Finetti reductions'', \emph{IEEE Transactions on Information Theory} {\bf 62}(3) (2016);
    arXiv[quant-ph]:1411.1582.

  \bibitem{B-L-M-P-P-R} J.~Barrett, N.~Linden, S.~Massar, S.~Pironio, S.~Popescu, D.~Roberts, ``Non-local correlations as an information theoretic resource'', \emph{Physical Review A} {\bf 71}(2):022101 (2005); arXiv:quant-ph/0404097.

  \bibitem{Bell} J.~Bell, ``On the Einstein Podolsky Rosen paradox'', \emph{Physics} {\bf 1}:195 (1964).

  \bibitem{B-C-R} M.~Berta, M.~Christandl, R.~Renner,
    ``The quantum reverse Shannon theorem based on one-shot information theory'',
    \emph{Commun. Math. Phys.} {\bf 306}, 579 (2011);
    arXiv[quant-ph]:0912.3805.

  \bibitem{Harry} H.~Buhrman, private communication (2015).

  \bibitem{B-C-F-G-G-O-S} H.~Buhrman, N.~Chandran, S.~Fehr, R.~Gelles, V.~Goyal, R.~Ostrovsky, C.~Schaffner,
    ``Position-based quantum cryptography: Impossibility and constructions'';
    arXiv[quant-ph]:1009.2490.

  \bibitem{B-F-S} H.~Buhrman, S.~Fehr, C.~Schaffner,
    ``On the Parallel Repetition of Multi-Player Games: The No-Signaling Case'',
    \emph{Proc. 9th Conference on the Theory of Quantum Computation, Communication
    and Cryptography (TQC14)} {\bf 27}, pp.~24-35 (2014);
    arXiv[quant-ph]:1312.7455.

  \bibitem{C-S} A.~Chailloux, G.~Scarpa,
    ``Parallel repetition of free entangled games: simplification and improvements'';
    arXiv[quant-ph]:1410.4397.

  \bibitem{C-K-R} M.~Christandl, R.~K\"{o}nig, R.~Renner,
    ``Post-selection technique for quantum channels with applications to quantum cryptography'',
    \emph{Phys. Rev. Lett.} {\bf 102}:020504 (2009);
    arXiv[quant-ph]:0809.3019.

  \bibitem{C-W-Y} M.K.~Chung, X.~Wu, H.~Yuen,
    ``Parallel repetition for entangled $k$-player games via fast quantum search'';
    arXiv[quant-ph]:1501.00033 (2015).

  \bibitem{C-H-S-H} J.F.~Clauser, M.A.~Horne, A.~Shimony, R.A.~Holt, ``Proposed experiment to test local hidden-variable theories'', \emph{Phys. Rev. Lett.} {\bf 23}(15), pp.~880-884 (1969).

  \bibitem{C-S-U-U} R.~Cleve, W.~Slofstra, F.~Unger, S.~Upadhyay, ``Strong parallel repetition theorem for quantum XOR proof systems'', \emph{Comput. Complex.} {\bf 17}(2), pp.~282-299 (2008); arXiv:quant-ph/0608146.

  \bibitem{D-S-V} I.~Dinur, D.~Steurer, T.~Vidick, ``A parallel repetition theorem for entangled projection games''; arXiv[quant-ph]:1310.4113.

  \bibitem{D-S-W} R.~Duan, S.~Severini, A.~Winter, ``On zero-error communication via quantum channels in the presence of noiseless feedback''; arXiv[quant-ph]:1502.02987.

  \bibitem{Ebe} P.H.~Eberhard, ``Background level and counter efficiencies required for a loophole-free Einstein-Podolsky-Rosen experiment'', Phys. Rev. A 47, pp.~747-750 (1993).

  \bibitem{F-K-T-W} S.~Fehr, J.~Kaniewski, M.~Tomamichel, S.~Wehner, ``A Monogamy-of-Entanglement Game with Applications to Device-Independent Quantum Cryptography'', \emph{New J. Phys.} {\bf 15}:103002; arXiv[quant-ph]:1210.4359.

  \bibitem{Feige} U.~Feige, ``On the success probability of the two provers in one-
 round proof systems'', \emph{Proc. 6th IEEE Symposium on Structure in Complexity Theory}, pp.~116-123 (1991).

  \bibitem{F-V} U.~Feige, O.~Verbitsky,
    ``Error reduction by parallel repetition -- A negative result'',
    \emph{Combinatorica} {\bf 2}(4), pp.~461-478 (2002).

  \bibitem{F-vdG} C.A.~Fuchs, J.~van de Graaf,
    ``Cryptographic distinguishability measures for quantum-mechanical states'',
    \emph{IEEE Trans. Inf. Theory} {\bf 45}(4), pp.~1216-1227 (1999).

  \bibitem{G-M} A.~Garg, N.D.~Mermin, ``Detector ineffciencies in the Einstein-Podolsky-Rosen experiment'', Phys. Rev. D 35, pp.~3831-3835 (1987).

  \bibitem{Han} E.~H\"{a}nggi, \emph{Device-independent quantum key distribution},
    PhD thesis, ETH Z\"urich (2010);
    arXiv[quant-ph]:1012.3878.

  %\bibitem{H-R-W} E. H\"{a}nggi, R. Renner, S. Wolf \emph{Quantum cryptography based solely on Bell's theorem}.


  \bibitem{Hol} T.~Holenstein,
    ``Parallel repetition: simplifications and the no-signaling case'',
    \emph{Theory of Computing} {\bf 5}(1), pp.~141-172 (2009);
    arXiv:cs/0607139.

  \bibitem{Ito} T.~Ito, ``Polynomial-space approximation of no-signaling provers'',
    \emph{Automata, Languages and Programming} {\bf 6198}, pp.~140-151 (2010);
    arXiv[cs.CC]:09082363.

  \bibitem{J-P-Y} R.~Jain, A.~Pereszlenyi, P.~Yao,
    ``A parallel repetition theorem for entangled two-player one-round games
    under product distributions''; arXiv[quant-ph]:1311.6309.

%  \bibitem{K-M-W-Y} S. Karumanchi, S. Mancini, A. Winter, D. Yang,
%    ``Quantum Channel Capacities with Passive Environment Assistance'',
%    arXiv[quant-ph]:1407.8160 (2014).

  \bibitem{K-R-R} Y.T.~Kalai, R.~Raz, R.D.~Rothblum, ``How to delegate computations: the power of no-signaling proofs'', \emph{Proc. 46th Annual ACM Symposium on Theory of Computing (STOC14)}, pp.~485-494 (2011).

  \bibitem{K-R} J. Kempe, O. Regev ``No strong parallel repetition with entangled and non-signaling provers'', \emph{Proc. 25th CCC10}, pp.~7-15 (2010); arXiv[quant-ph]:0911.0201.

  \bibitem{K-R-T} J. Kempe, O. Regev, B. Toner, ``Unique games with entangled provers are easy'', \emph{Proc. 49th Annual IEEE Symposium on Foundations of Computer Science (FOCS08)}, pp.~457-466 (2008); arXiv[quant-ph]:0710.0655.

  \bibitem{K-V} J. Kempe, T. Vidick, ``Parallel repetition of entangled games'', \emph{Proc. 43rd Annual ACM Symposium on Theory of Computing (STOC11)}, pp.~353-362 (2011); arXiv[quant-ph]:1012.4728.

  \bibitem{L-W} C. Lancien, A. Winter,
    ``Flexible constrained de Finetti reductions and applications'', in preparation.

  \bibitem{L-GP-R-C} A. Leverrier, R. Garc\'{\i}a-Patr\'{o}n, R. Renner, N.J. Cerf,  ``Security of continuous-variable quantum key distribution against general attacks'', \emph{Phys. Rev. Lett.} {\bf 110}:030502 (2013); arXiv[quant-ph]:1208.4920.

  \bibitem{P-R} S. Popescu, D. Rohrlich, ``Nonlocality as an axiom'', \emph{Foundations of Physics} {\bf 24}(3), pp.~379–385 (1994).

  \bibitem{Rao} A. Rao,
    ``Parallel repetition in projection games and a concentration bound'',
    \emph{SIAM J. Comput.} {\bf 40}(6), pp.~1871-1891 (2011).

  \bibitem{Raz} R. Raz,
    ``A parallel repetition theorem'',
    \emph{SIAM J. Comput.} {\bf 27}(3), pp.~763-803 (1998).

  \bibitem{Ros} R. Rosen,
    ``A $k$-provers parallel repetition theorem for a version of no-signaling model'',
    \emph{Discrete Math., Alg. and Appl.} {\bf 2}(4), pp.~457-468 (2010).

  \bibitem{Tsi} B.S. Tsirelson, ``Quantum generalizations of Bell's inequality'', \emph{Lett. Math. Phys.} {\bf 4}(2), pp.~93-100 (1980).

  \bibitem{Yuen} H. Yuen, ``A parallel repetition theorem for all entangled games''; arXiv[quant-ph]:1604.04340.

\end{thebibliography}
\end{document}